\title{The Inconsistent Labelling Problem of Stutter-Preserving Partial-Order Reduction}
\author{Thomas Neele\inst{1} \and Antti Valmari\inst{2} \and Tim A.C. Willemse\inst{1}}
\institute{Eindhoven University of Technology, The Netherlands \and
	University of Jyv\"askyl\"a, Finland}
\begin{document}
\pagestyle{plain}
\maketitle

\begin{abstract}
	In model checking, partial-order reduction (POR) is an effective technique to reduce the size of the state space.
	Stubborn sets are an established variant of POR and have seen many applications over the past 31 years.
	One of the early works on stubborn sets shows that a combination of several conditions on the reduction is sufficient to preserve stutter-trace equivalence, making stubborn sets suitable for model checking of linear-time properties.
	In this paper, we identify a flaw in the reasoning and show with a counter-example that stutter-trace equivalence is not necessarily preserved.
	We propose a solution together with an updated correctness proof.
	Furthermore, we analyse in which formalisms this problem may occur.
	The impact on practical implementations is limited, since they all compute a correct approximation of the theory.
\end{abstract}

\section{Introduction}
In formal methods, model checking is a technique to automatically decide the correctness of a system's design.
The many interleavings of concurrent processes can cause the state space to grow exponentially with the number of components, known as the \emph{state-space explosion} problem.
\emph{Partial-order reduction} (POR) is one technique that can alleviate this problem.
Several variants of POR exist, such as \emph{ample sets}~\cite{Peled1993}, \emph{persistent set}~\cite{Godefroid1996} and \emph{stubborn sets}~\cite{Valmari1991a,Valmari2017a}.
For each of those variants, sufficient conditions for preservation of stutter-trace equivalence have been identified.
Since LTL without the next operator (LTL$_{-X}$) is invariant under finite stuttering, this allows one to check most LTL properties under POR.

However, the correctness proofs for these methods are intricate and not reproduced often.
For stubborn sets, LTL$_{-X}$-preserving conditions and an accompanying correctness result were first presented in~\cite{Valmari1991b}, and discussed in more detail in~\cite{Valmari1992}.
While trying to reproduce the proof for~\cite[Theorem 2]{Valmari1992} (see also Theorem~\ref{thm:d1_preserve_stutter_trace_equivalence} in the current work), we ran into an issue while trying to prove a certain property of the construction used in the original proof~\cite[Construction 1]{Valmari1992}.
This led us to discover that stutter-trace equivalence is not necessarily preserved.
We will refer to this as the \emph{inconsistent labelling problem}.
The essence of the problem is that POR in general, and the proofs in~\cite{Valmari1992} in particular, reason mostly about actions, which label the transitions.
The only relevance of the state labelling is that it determines which actions are \emph{visible}.
On the other hand, stutter-trace equivalence and the LTL semantics are purely based on state labels.
The correctness proof in~\cite{Valmari1992} does not deal properly with this disparity.
Further investigation shows that the same problem also occurs in two works of Bene\v{s} \etal~\cite{Benes2011,Benes2009}, who apply ample sets to state/event LTL model checking.

Consequently, any application of stubborn sets in LTL$_{-X}$ model checking is possibly unsound, both for safety and liveness properties.
In literature, the correctness of several theories~\cite{Laarman2016,Liebke2019,Valmari1996} relies on the incorrect theorem.

Our contributions are as follows:
\begin{itemize}
	\item We prove the existence of the inconsistent labelling problem with a counter-example.
	This counter-example is valid for weak stubborn sets and, with a small modification, in a non-deterministic setting for strong stubborn sets.
	\item We propose to strengthen one of the stubborn set conditions and show that this modification resolves the issue (Theorem~\ref{thm:correctness_d1'}).
	\item We analyse in which circumstances the inconsistent labelling problem occurs and, based on the conclusions, discuss its impact on existing literature.
	This includes a thorough analysis of Petri nets and several different notions of invisible transitions and atomic propositions.
\end{itemize}
Our investigation shows that probably all practical implementations of stubborn sets compute an approximation which resolves the inconsistent labelling problem.
Furthermore, POR methods based on the standard independence relation, such as ample sets and persistent sets, are not affected.

The rest of the paper is structured as follows.
In Section~\ref{sec:preliminaries}, we introduce the basic concepts of stubborn sets and stutter-trace equivalence, which is not preserved in the counter-example of Section~\ref{sec:counter_example}.
A solution to the inconsistent labelling problem is discussed in Section~\ref{sec:strengthen_d1}, together with an updated correctness proof.
Sections~\ref{sec:safe_formalisms} and~\ref{sec:petri_nets} discuss several settings in which correctness is not affected.
Finally, Section~\ref{sec:related_work} presents related work and Section~\ref{sec:conclusion} presents a conclusion.

\section{Preliminaries}
\label{sec:preliminaries}
Since LTL relies on state labels and POR relies on edge labels, we assume the existence of some fixed set of atomic propositions \AP to label the states and a fixed set of edge labels \Act, which we will call \emph{actions}.
Actions are typically denoted with the letter $\act$.

\begin{definition}
	\label{def:labelled_LSTS}
	A \emph{labelled state transition system}, short \emph{LSTS}, is a directed graph $\TS = (S,\edgerel,\init{s},L)$, where:
	\begin{itemize}
		\item $S$ is the state space;
		\item $\edgerel \, \subseteq S \times \Act \times S$ is the transition relation;
		\item $\init{s} \in S$ is the initial state; and
		\item $L: S \to 2^{\AP}$ is a function that labels states with atomic propositions.
	\end{itemize}
\end{definition}

We write $s \transition{\act} t$ whenever $(s,\act,t) \in\, \edgerel$.
A path is a (finite or infinite) alternating sequence of states and actions: $s_0 \transition{\act_1} s_1 \transition{\act_2} s_2 \dots$.
We sometimes omit the intermediate and/or final states if they are clear from the context or not relevant, and write $s \transition{\act_1\dots \act_n} t$ or $s \transition{\act_1\dots \act_n}$.
Paths that start in the initial state $\init{s}$ are called \emph{initial paths}.
Given a path $\pi = s_0 \transition{\act_1} s_1 \transition{\act_2} s_2 \dots$, the trace of $\pi$ is the sequence of state labels observed along $\pi$, \viz $L(s_0) L(s_1) L(s_2) \dots$.
An action $\act$ is enabled in a state $s$, notation $s \transition{\act}$, if and only if there is a transition $s \transition{\act} t$ for some $t$.
In a given LSTS \TS, $\enabled_{\TS}(s)$ is the set of all enabled actions in a state $s$.
A set $\Inv$ of invisible actions is chosen such that if (but not necessarily only if) $a \in \Inv$, then for all states $s$ and $t$, $s \transition{\act} t$ implies $L(s) = L(t)$.
Note that this definition allows the set $\Inv$ to be under-approximated.
An action that is not invisible is called \emph{visible}.
We say \TS is \emph{deterministic} if and only if $s \transition{a} t$ and $s \transition{a} t'$ imply $t = t'$, for all states $s$, $t$ and $t'$ and actions $\act$.
To indicate that \TS is not necessarily deterministic, we say \TS is \emph{non-deterministic}.

\subsection{Stubborn sets}
In POR, \emph{reduction functions} play a central role.
A reduction function $\redf: S \to 2^\Act$ indicates which transitions to explore in each state.
When starting at the initial state $\init{s}$, a reduction function induces a \emph{reduced LSTS} as follows.

\begin{definition}
	Let $\TS = (S,\edgerel,\init{s},L)$ be an LSTS and $\redf: S \to 2^\Act$ a reduction function.
	Then the \emph{reduced LSTS} induced by $\redf$ is defined as $\TS_r = (S_r,\rededgerel,\init{s},L_r)$, where $L_r$ is the restriction of $L$ on $S_r$ and $S_r$ and $\rededgerel$ are the smallest sets such that the following holds:
	\begin{itemize}
		\item $\init{s} \in S_r$; and
		\item If $s \in S_r$, $s \transition{\act} t$ and $\act \in \redf(s)$, then $t \in S_r$ and $s \redtransition{\act} t$.
	\end{itemize}
\end{definition}

In the remainder of this paper, we will assume the state space of a reduced LSTS is finite.
This is essential for the correctness of the approach detailed below.
In general, a reduction function is not guaranteed to preserve almost any property of an LSTS.
Below, we list a number of conditions that have been proposed in literature; they aim to preserve LTL$_{-X}$.
Here, an action $\act$ that satisfies $s' \transition{\act}$, for all $\act_1,\dots,\act_n \notin \redf(s)$ and paths $s \transition{\act_1\dots\act_n} s'$, is called a \emph{key action} in $s$, typically denoted with $\keyact$.
\begin{description}[leftmargin=!,labelwidth=\widthof{\bfseries D2w}]
	\item[\textbf{D0}] If $\enabled(s) \neq \emptyset$, then $\redf(s) \cap \enabled(s) \neq \emptyset$.
	\item[\textbf{D1}] For all $\act \in \redf(s)$ and $\act_1,\dots,\act_n \notin \redf(s)$, if $s \transition{\act_1} \dots \transition{\act_n} s_n \transition{\act} s'_n$, then there are states $s',s'_1,\dots,s'_{n-1}$ such that $s \transition{\act} s' \transition{\act_1} s'_1 \transition{\act_2} \dots \transition{\act_n} s'_n$.
	\item[\textbf{D2}] For all actions $\act \in \redf(s)$ and $\act_1,\dots,\act_n \notin \redf(s)$, if $s \transition{a}$ and $s \transition{\act_1\dots  \act_n} s'$, then $s' \transition{\act}$.
	\item[\textbf{D2w}] If $\enabled(s) \neq \emptyset$, then there is an action $\act \in \redf(s)$ such that for all $\act_1,\dots,\act_n \notin \redf(s)$, if $s \transition{\act_1\dots  \act_n} s'$, then $s' \transition{\act}$.
	\item[\textbf{V}] If $\redf(s)$ contains an enabled visible action, then it contains all visible actions.
	\item[\textbf{I}] If an invisible action is enabled, then $\redf(s)$ contains an invisible key action.
	\item[\textbf{L}] For every visible action $\act$, every cycle in the reduced LSTS contains a state $s$ such that $\act \in \redf(s)$.
\end{description}

These conditions are used to define \emph{strong} and \emph{weak} stubborn sets in the following way.

\begin{definition}
	A reduction function $r: S \to 2^\Act$ is a \emph{strong stubborn set} iff for all states $s \in S$, the conditions \textbf{D0}, \textbf{D1}, \textbf{D2}, \textbf{V}, \textbf{I}, \textbf{L} all hold.
\end{definition}
\begin{definition}
	A reduction function $r: S \to 2^\Act$ is a \emph{weak stubborn set} iff for all states $s \in S$, the conditions \textbf{D1}, \textbf{D2w}, \textbf{V}, \textbf{I}, \textbf{L} all hold.
\end{definition}

Below, we also use `weak/strong stubborn set' to refer to the set of actions $\redf(s)$ in some state $s$.
First, note that key actions are always enabled, by setting $n = 0$.
Furthermore, a stubborn set can never introduce new deadlocks, either by \textbf{D0} or \textbf{D2w}.
Condition \textbf{D1} enforces that a key action $\keyact \in \redf(s)$ does not disable other paths that are not selected for the stubborn set.
A visual representation of condition \textbf{D1} can be found in Figure~\ref{fig:condition_D1}.
When combined, \textbf{D1} and \textbf{D2w} are sufficient conditions for preservation of deadlocks.
Condition \textbf{V} enforces that the paths $s \transition{\act_1\dots\act_n\act} s'_n$ and $s \transition{\act\act_1\dots\act_n} s'_n$ in \textbf{D1} contain the same sequence of visible actions.
The purpose of condition \textbf{I} is to preserve the possibility to perform an invisible action, if one is enabled.
Finally, we have condition \textbf{L} to deal with the \emph{action-ignoring problem}, which occurs when an action is never selected for the stubborn set and always ignored.
Since we assume that the reduced LSTS is finite, it suffices to reason in \textbf{L} about every cycle instead of every infinite path.
The combination of \textbf{I} and \textbf{L} helps to preserve divergences.

Conditions \textbf{D0} and \textbf{D2} together imply \textbf{D2w}, and thus every strong stubborn set is also a weak stubborn set.
Since the reverse does not necessarily hold, weak stubborn sets might offer more reduction.

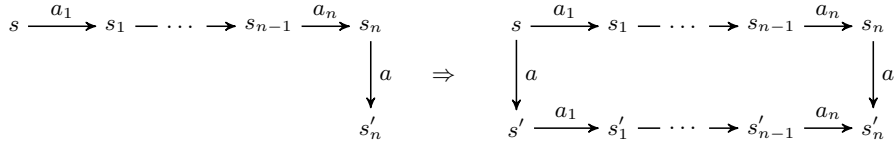
\begin{figure}[t]
	\centering
	\resizebox{\textwidth}{!}{
		\begin{tikzpicture}[->,>=stealth',shorten >=0pt,auto,node distance=2.0cm,semithick]
		\begin{scope}
			\def\d{1.4}
			\node (s)     at (0,\d)        {$s$};
			\node (s1)    at (\d,\d)       {$s_1$};
			\node (d)     at (1.65*\d,\d)  {$\dots$};
			\node (sn-1)  at (2.5*\d,\d)   {$s_{n-1}$};
			\node (sn)    at (3.5*\d,\d)   {$s_n$};
			\node (s'n)   at (3.5*\d,0)    {$s'_n$};
			\path
			(s)     edge node {$\act_1$} (s1)
			(d)     edge              (sn-1)
			(sn-1)  edge node {$\act_n$} (sn)
			(sn)    edge node {$\act$}   (s'n);
			\path[-]
			(s1)    edge              (d);
		\end{scope}
		\node at (5.9,0.7) {$\Rightarrow$};
		\begin{scope}[xshift=6.9cm]
			\def\d{1.4}
			\node (s)     at (0,\d)        {$s$};
			\node (s1)    at (\d,\d)       {$s_1$};
			\node (d)     at (1.65*\d,\d)  {$\dots$};
			\node (sn-1)  at (2.5*\d,\d)   {$s_{n-1}$};
			\node (sn)    at (3.5*\d,\d)   {$s_n$};
			\node (s')    at (0,0)         {$s'$};
			\node (s'1)   at (\d,0)        {$s'_1$};
			\node (d')    at (1.65*\d,0)   {$\dots$};
			\node (s'n-1) at (2.5*\d,0)    {$s'_{n-1}$};
			\node (s'n)   at (3.5*\d,0)    {$s'_n$};
			\path
			(s)     edge node {$\act_1$} (s1)
			(d)     edge              (sn-1)
			(sn-1)  edge node {$\act_n$} (sn)
			(sn)    edge node {$\act$}   (s'n);
			\path[-]
			(s1)    edge              (d);
			\path
			(s')    edge node {$\act_1$} (s'1)
			(d')    edge              (s'n-1)
			(s'n-1) edge node {$\act_n$} (s'n)
			(s)     edge node {$\act$}   (s');
			\path[-]
			(s'1)   edge              (d');
		\end{scope}
		\end{tikzpicture}
	}
	\caption{Visual representation of condition \textbf{D1}.}
	\label{fig:condition_D1}
\end{figure}

\subsection{Weak and Stutter Equivalence}
To reason about the similarity of an LSTS $\TS$ and its reduced LSTS $\TS_r$, we introduce the notions of \emph{weak equivalence}, which operates on actions, and \emph{stutter equivalence}, which operates on states.
The definitions are generic, so that they can also be used in Section~\ref{sec:petri_nets}.
\begin{definition}
	The paths $\pi$ and $\pi'$ are weakly equivalent with respect to a set of actions $I$, notation $\pi \weakeq_I \pi'$, if and only if they are both finite or both infinite and their respective projections on $\Act \setminus I$ are equal.
\end{definition}
\begin{definition}
	The \emph{no-stutter trace} under labelling $L$ of a path $s_0 \transition{\act_1} s_1 \transition{\act_2} \dots$ is the sequence of those $L(s_i)$ such that $i = 0$ or $L(s_i) \neq L(s_{i-1})$.
	Paths $\pi$ and $\pi'$ are stutter equivalent under $L$, notation $\pi \stuteq_L \pi'$, iff they are both finite or both infinite, and they yield the same no-stutter trace under $L$.
\end{definition}

We typically consider weak equivalence with respect to the set of invisible actions $\Inv$.
In that case, we write $\pi \weakeq \pi'$.
We also omit the subscript for stutter equivalence when reasoning about the standard labelling function and write $\pi \stuteq \pi'$.
Remark that stutter equivalence is invariant under finite repetitions of state labels, hence its name.
We lift both equivalences to LSTSs, and say that $\TS$ and $\TS'$, having the same set of invisible actions, are \emph{weak-trace equivalent}  iff for every initial path $\pi$ in $\TS$, there is a weakly equivalent initial path $\pi'$ in $\TS'$ and vice versa.
Likewise, $\TS$ and $\TS'$, having the same labelling function, are \emph{stutter-trace equivalent} iff for every initial path $\pi$ in $\TS$, there is a stutter equivalent initial path $\pi'$ in $\TS'$ and vice versa.
Since we only reason about equivalence between a full LSTS $\TS$ and its reduced LSTS $\TS_r$, which is a subgraph of $\TS$, the assumption that both have the same set of invisible actions and the same state labelling is not limiting.

In general, weak equivalence and stutter equivalence are incomparable, even for initial paths.
However, for some LSTSs, these notions can be related in a certain way.
We formalise this in the following definition.
\begin{definition}
	Let $\TS$ be an LSTS and $\pi$ and $\pi'$ two paths in \TS that both start in some state $s$.
	Then, $\TS$ is \emph{labelled consistently} iff $\pi \weakeq \pi'$ implies $\pi \stuteq \pi'$.
\end{definition}

Note that if an LSTS is labelled consistently, then in particular all weakly equivalent initial paths are also stutter equivalent.
Hence, if an LSTS $\TS$ is labelled consistently and weak-trace equivalent to a subgraph $\TS'$, then $\TS$ and $\TS'$ are also stutter-trace equivalent.

Stubborn sets as defined in the previous section aim to preserve stutter-trace equivalence between the original and the reduced LSTS.
The motivation behind this is that two stutter-trace equivalent LSTSs satisfy exactly the same formulae~\cite{BaierKatoen-PMC} in LTL$_{-X}$ (LTL without the next operator).
The following theorem, which is frequently cited in literature~\cite{Laarman2016,Liebke2019,Valmari1996}, aims to show that stubborn sets indeed preserve stutter-trace equivalence.
Its original formulation reasons about the validity of an arbitrary LTL$_{-X}$ formula.
Here, we give the alternative formulation based on stutter-trace equivalence.

\begin{theorem}{\cite[Theorem 2]{Valmari1992}}
	\label{thm:d1_preserve_stutter_trace_equivalence}
	Given an LSTS $\TS$ and a weak/strong stubborn set $\redf$, then the reduced LSTS $\TS_\redf$ is stutter-trace equivalent to $\TS$.
\end{theorem}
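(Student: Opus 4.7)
\smallskip\noindent
The plan is to factor the argument through \emph{weak-trace equivalence} with respect to the set $\Inv$ of invisible actions. Because $\TS_\redf$ is a subgraph of $\TS$, every initial path of $\TS_\redf$ is already an initial path of $\TS$, so only one direction of both equivalences needs work. Concretely, I would prove (i) that for every initial path $\pi$ of $\TS$ there is an initial path $\pi'$ of $\TS_\redf$ with $\pi \weakeq \pi'$, and (ii) that $\TS$ is labelled consistently. Combined, the remark preceding the theorem then yields stutter-trace equivalence in both directions.

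\smallskip\noindent
For (i), I would build $\pi'$ incrementally while maintaining the invariant that after producing a prefix ending in a reduced state $t$, the suffix of $\pi$ not yet consumed can be matched, up to weak equivalence, by a $\TS$-path out of $t$. At state $t$, inspect the remaining suffix $\rho$ and look for the first action that lies in $\redf(t)$: if such an $\act$ exists, repeated use of \textbf{D1} commutes $\act$ forward past the preceding $\redf(t)$-free prefix, giving a single $\redtransition{\act}$ step to extend $\pi'$ and a shorter tail on which to recurse. If no such $\act$ exists, condition \textbf{I} furnishes an invisible key action in $\redf(t)$ whenever any invisible action is enabled, and appending it to $\pi'$ preserves weak equivalence; \textbf{D2w} (or \textbf{D0}+\textbf{D2} in the strong case) guarantees that such a key remains enabled along every finite prefix of $\rho$, so the invariant is maintained. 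Infinite paths are handled by \textbf{L}: along any cycle of $\TS_\redf$ every visible action lies in some $\redf(s)$, preventing indefinite postponement of a visible action of $\pi$, while \textbf{V} keeps the visible-action projections of the two paths in lock-step.

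\smallskip\noindent
For (ii), the natural argument is: if $\pi$ and $\pi'$ start in the same state and agree on their projections to the visible actions, then these visible actions appear in the same order and split both paths into the same number of invisible segments, each of which, by the defining property of invisibility, merely stutters the label reached just before it; the no-stutter traces should therefore coincide. This is exactly where I expect the main obstacle to lie. The invisibility hypothesis only constrains the label change along a single invisible transition, and says nothing about whether the state reached \emph{after} a block of (visible and invisible) actions coincides between $\pi$ and $\pi'$. In particular, at each application of \textbf{D1} in step (i), the two sides of Figure~\ref{fig:condition_D1} are required to share their action labels but are a priori free to carry different state labels on the intermediate $s_i$ and $s'_i$, so matching the stuttered labels across commuted paths does not follow from the stated axioms alone. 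I would therefore expect the proof of (ii) to be the delicate part, and to force either a strengthening of \textbf{D1} that propagates state labels across the commutation diamond, or a tightening of the notion of invisibility beyond what Definition~\ref{def:labelled_LSTS} and the preliminaries provide.
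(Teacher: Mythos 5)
Your instinct in step (ii) is exactly right, and you should trust it: the obstacle you locate there is not a gap in your own argument but a genuine falsity of the statement. Theorem~\ref{thm:d1_preserve_stutter_trace_equivalence} is quoted from~\cite{Valmari1992} precisely in order to be refuted. The LSTS $\TS^C$ of Section~\ref{sec:counter_example} is a deterministic system with a weak stubborn set whose reduction loses the trace $\emptyset \{q\} \emptyset \emptyset \{q\}^\omega$, and the left and middle diagrams of Figure~\ref{fig:paths_d1_a_invisible} are concrete instances of the phenomenon you describe: $s \transition{\act\act_1\act_2} s'$ and $s \transition{\act_1\act_2\act} s'$ are weakly equivalent (only $\act$ is invisible, and it occurs on both paths) yet not stutter equivalent, because nothing in \textbf{D1} or \textbf{V} ties $L(s_i)$ to $L(s'_i)$ on the two sides of the commutation square. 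So consistent labelling of $\TS$ fails in general, and with it the inference from $\TS \weakeq \TS_r$ to $\TS \stuteq \TS_r$ on which your step (ii) --- and the original proof --- relies.

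Your part (i) is sound and is essentially~\cite[Construction 1]{Valmari1992}: it does establish weak-trace equivalence of $\TS$ and $\TS_\redf$, and in the repaired setting it becomes Lemmata~\ref{lmm:shift_action_forward} and~\ref{lmm:introduce_key_action}. The two escapes you propose from the impasse in (ii) are also exactly the two repairs pursued here: strengthening \textbf{D1} to \textbf{D1'}, which adds the transitions $s_i \transition{\act} s'_i$ for invisible $\act$ and hence forces $L(s_i) = L(s'_i)$ across the diamond (Theorem~\ref{thm:correctness_d1'}), and, for Petri nets, tightening the notion of invisibility until the LSTS is labelled consistently (Section~\ref{sec:petri_nets}). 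One refinement worth noting: the repaired proof does not re-establish global consistent labelling of $\TS$ --- that can still fail --- but only the local stutter equivalence $\pi \stuteq \pi'$ of each original path with the specific mimicking path produced by the construction, which is all that is needed. So the correct conclusion for the theorem as stated is that it is false, and your diagnosis of why no proof from the listed conditions can exist is the right one.
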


The original proof correctly concludes that the stubborn set method preserves the order of visible actions in the reduced LSTS, \ie, $\TS \weakeq \TS_r$.
However, this only implies preservation of stutter-trace equivalence ($\TS \stuteq \TS_r$) if the full LSTS is labelled consistently, so Theorem~\ref{thm:d1_preserve_stutter_trace_equivalence} is invalid in the general case.
In the next section, we will see a counter-example which exploits this fact.

\section{Counter-Example}
\label{sec:counter_example}
Consider the LSTS in Figure~\ref{fig:counter_example}, which we will refer to with $\TS^C$.
There is only one atomic proposition $q$, which holds in the grey states and is false in the other states.
The initial state $\init{s}$ is marked with an incoming arrow.
First, note that this LSTS is deterministic.
The actions $\act_1$, $\act_2$ and $\act_3$ are visible and $\act$ and $\keyact$ are invisible.
By setting $r(\init{s}) = \{\act,\keyact\}$, which is a weak stubborn set, we obtain a reduced LSTS $\TS^C_r$ that does not contain the dashed states and transitions.
The original LSTS contains the trace $\emptyset \{q\} \emptyset \emptyset \{q\}^\omega$, obtained by following the path with actions $a_1 a_2 a a_3^\omega$.
However, the reduced LSTS does not contain a stutter equivalent trace.
This is also witnessed by the LTL$_{-X}$ formula $\square (q \Rightarrow \square(q \lor \square \neg q))$, which holds for $\TS^C_r$, but not for $\TS^C$.

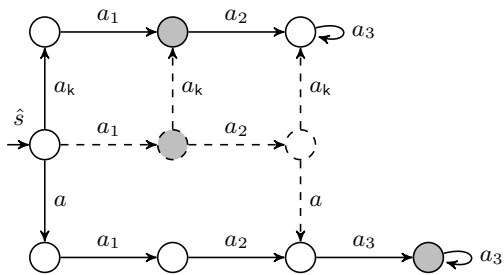
\begin{figure}
	\centering
	\begin{tikzpicture}[->,>=stealth',shorten >=0pt,auto,node distance=2.0cm,semithick]
	\tikzstyle{state}=[draw,inner sep=4pt,circle]
	\def\x{1.7}
	\def\y{1.5}
	
	\node[state,label={above left:$\init{s}$}] (1)  at (0,\y)      {};
	\node[state,dashed,fill=lightgray]         (2)  at (\x,\y)     {};
	\node[state,dashed]                        (3)  at (2*\x,\y)   {};
	\node[state]                               (4)  at (0,0)       {};
	\node[state]                               (5)  at (\x,0)      {};
	\node[state]                               (6)  at (2*\x,0)    {};
	\node[state]                               (7)  at (0,2*\y)    {};
	\node[state,fill=lightgray]                (8)  at (\x,2*\y)   {};
	\node[state]                               (9)  at (2*\x,2*\y) {};
	\node[state,fill=lightgray]                (10) at (3*\x,0)    {};
	\path
		(-0.5,\y) edge (1)
		(1)  edge             node {$\act$}   (4)
		(4)  edge             node {$\act_1$} (5)
		(5)  edge             node {$\act_2$} (6)
		(1)  edge[']          node {$\keyact$} (7)
		(7)  edge             node {$\act_1$} (8)
		(8)  edge             node {$\act_2$} (9)
		(9)  edge[loop right] node {$\act_3$} (9)
		(6)  edge             node {$\act_3$} (10)
		(10) edge[loop right] node {$\act_3$} (10)
	;
	\path[dashed]
		(1)  edge             node {$\act_1$} (2)
		(2)  edge             node {$\act_2$} (3)
		(3)  edge             node {$\act$}   (6)
		(3)  edge[']          node {$\keyact$} (9)
		(2)  edge[']          node {$\keyact$} (8)
	;
	\end{tikzpicture}
	\caption{Counter-example showing that stubborn sets do not preserve stutter-trace equivalence.
	Grey state are labelled with $\{ q \}$.
	The dashed transitions and states are not present in the reduced LSTS.}
	\label{fig:counter_example}
\end{figure}

A very similar example can be used to show that strong stubborn sets suffer from the same problem.
Consider again the LSTS in Figure~\ref{fig:counter_example}, but assume that $a = \keyact$, making the LSTS non-deterministic.
Now, $r(\init{s}) = \{\act\}$ is a strong stubborn set and again the trace $\emptyset \{q\} \emptyset \emptyset \{q\}^\omega$ is not preserved in the reduced LSTS.
In Section~\ref{sec:deterministic_lstss}, we will see why the inconsistent labelling problem does not occur for deterministic systems under strong stubborn sets.

The core of the problem lies in the fact that condition \textbf{D1}, even when combined with \textbf{V}, does not enforce that the two paths it considers are stutter equivalent.
Consider the paths $s \transition{\act}$ and $s \transition{\act_1 \act_2 \act}$ and assume that $\act \in \redf(s)$ and $\act_1, \act_2 \notin \redf(s)$.
Conditions \textbf{D1} and \textbf{V} ensure that at least one of the following two holds:
\begin{enumerate*}[label=\textnormal{(\roman*)}]
	\item $\act$ is invisible, or
	\item $\act_1$ and $\act_2$ are invisible
\end{enumerate*}.
Half of the possible scenarios are depicted in Figure~\ref{fig:paths_d1}; the other half are symmetric.
Again, the grey states (and only those states) are labelled with $\{ q \}$.

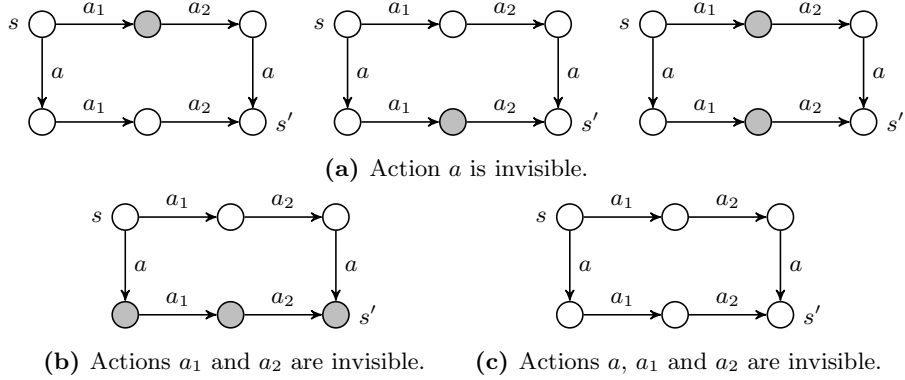
\begin{figure}
	\centering
	\begin{subfigure}{\textwidth}
		\centering
		\begin{tikzpicture}[->,>=stealth',shorten >=0pt,auto,node distance=2.0cm,semithick]
		\tikzstyle{state}=[draw,inner sep=3.5pt,circle]
		\def\x{1.4}
		\def\y{1.3}
		
		\begin{scope}
		\node[state,label={left:$s$}]   (1)  at (0,\y)      {};
		\node[state,fill=lightgray]     (2)  at (\x,\y)     {};
		\node[state]                    (3)  at (2*\x,\y)   {};
		\node[state]                    (4)  at (0,0)       {};
		\node[state]                    (5)  at (\x,0)      {};
		\node[state,label={right:$s'$}] (6)  at (2*\x,0)    {};
		\path
		(1)  edge node {$\act$}   (4)
		(4)  edge node {$\act_1$} (5)
		(5)  edge node {$\act_2$} (6)
		(1)  edge node {$\act_1$} (2)
		(2)  edge node {$\act_2$} (3)
		(3)  edge node {$\act$}   (6)
		;
		\end{scope}
		
		\begin{scope}[xshift=2.9*\x cm]
		\node[state,label={left:$s$}]   (1)  at (0,\y)      {};
		\node[state]                    (2)  at (\x,\y)     {};
		\node[state]                    (3)  at (2*\x,\y)   {};
		\node[state]                    (4)  at (0,0)       {};
		\node[state,fill=lightgray]     (5)  at (\x,0)      {};
		\node[state,label={right:$s'$}] (6)  at (2*\x,0)    {};
		\path
		(1)  edge node {$\act$}   (4)
		(4)  edge node {$\act_1$} (5)
		(5)  edge node {$\act_2$} (6)
		(1)  edge node {$\act_1$} (2)
		(2)  edge node {$\act_2$} (3)
		(3)  edge node {$\act$}   (6)
		;
		\end{scope}
		
		\begin{scope}[xshift=5.8*\x cm]
		\node[state,label={left:$s$}]   (1)  at (0,\y)      {};
		\node[state,fill=lightgray]     (2)  at (\x,\y)     {};
		\node[state]                    (3)  at (2*\x,\y)   {};
		\node[state]                    (4)  at (0,0)       {};
		\node[state,fill=lightgray]     (5)  at (\x,0)      {};
		\node[state,label={right:$s'$}] (6)  at (2*\x,0)    {};
		\path
		(1)  edge node {$\act$}   (4)
		(4)  edge node {$\act_1$} (5)
		(5)  edge node {$\act_2$} (6)
		(1)  edge node {$\act_1$} (2)
		(2)  edge node {$\act_2$} (3)
		(3)  edge node {$\act$}   (6)
		;
		\end{scope}
		\end{tikzpicture}
		\caption{Action $\act$ is invisible.}
		\label{fig:paths_d1_a_invisible}
	\end{subfigure}
	\begin{subfigure}{0.45\textwidth}
		\centering
		\begin{tikzpicture}[->,>=stealth',shorten >=0pt,auto,node distance=2.0cm,semithick]
		\tikzstyle{state}=[draw,inner sep=3.5pt,circle]
		\def\x{1.4}
		\def\y{1.3}
		
		\begin{scope}
		\node[state,label={left:$s$}]                  (1)  at (0,\y)      {};
		\node[state]                                   (2)  at (\x,\y)     {};
		\node[state]                                   (3)  at (2*\x,\y)   {};
		\node[state,fill=lightgray]                    (4)  at (0,0)       {};
		\node[state,fill=lightgray]                    (5)  at (\x,0)      {};
		\node[state,fill=lightgray,label={right:$s'$}] (6)  at (2*\x,0)    {};
		\path
		(1)  edge node {$\act$}   (4)
		(4)  edge node {$\act_1$} (5)
		(5)  edge node {$\act_2$} (6)
		(1)  edge node {$\act_1$} (2)
		(2)  edge node {$\act_2$} (3)
		(3)  edge node {$\act$}   (6)
		;
		\end{scope}
		\end{tikzpicture}
		\caption{Actions $\act_1$ and $\act_2$ are invisible.}
		\label{fig:paths_d1_a12_invisible}
	\end{subfigure}
	\hspace{0.2cm}
	\begin{subfigure}{0.45\textwidth}
		\centering
		\begin{tikzpicture}[->,>=stealth',shorten >=0pt,auto,node distance=2.0cm,semithick]
		\tikzstyle{state}=[draw,inner sep=3.5pt,circle]
		\def\x{1.4}
		\def\y{1.3}
		
		\node[state,label={left:$s$}]    (1)  at (0,\y)      {};
		\node[state]                     (2)  at (\x,\y)     {};
		\node[state]                     (3)  at (2*\x,\y)   {};
		\node[state]                     (4)  at (0,0)       {};
		\node[state]                     (5)  at (\x,0)      {};
		\node[state,label={right:$s'$}]  (6)  at (2*\x,0)    {};
		\path
		(1)  edge node {$\act$}   (4)
		(4)  edge node {$\act_1$} (5)
		(5)  edge node {$\act_2$} (6)
		(1)  edge node {$\act_1$} (2)
		(2)  edge node {$\act_2$} (3)
		(3)  edge node {$\act$}   (6)
		;
		\end{tikzpicture}
		\caption{Actions $\act$, $\act_1$ and $\act_2$ are invisible.}
		\label{fig:paths_d1_aa12_invisible}
	\end{subfigure}
	\caption{Five possible scenarios when $\act \in \redf(s)$ and $\act_1, \act_2 \notin \redf(s)$, according to conditions \textbf{D1} and \textbf{V}.}
	\label{fig:paths_d1}
\end{figure}

The left and middle cases in Figure~\ref{fig:paths_d1_a_invisible} are problematic.
In both LSTSs, the paths $s \transition{\act_1 \act_2 \act} s'$ and $s \transition{\act \act_1 \act_2} s'$ are weakly equivalent, since $a$ is invisible.
However, they are not stutter equivalent, and therefore these LSTSs are not labelled consistently.
The left LSTS in Figure~\ref{fig:paths_d1_a_invisible} forms the core of the counter-example $\TS^C$, with the rest of $\TS^C$ serving to satisfy condition \textbf{D2}/\textbf{D2w}.

\section{Strengthening Condition D1}
\label{sec:strengthen_d1}
To fix the issue with inconsistent labelling, we propose the following alternative, stronger version of \textbf{D1}.
\begin{description}[leftmargin=!,labelwidth=\widthof{\bfseries D1'}]
	\item[\textbf{D1'}] For all $\act \in \redf(s)$ and $\act_1,\dots,\act_n \notin \redf(s)$, if $s \transition{\act_1} s_1 \transition{\act_2} \dots \transition{\act_n} s_n \transition{\act} s'_n$, then there are states $s',s'_1,\dots,s'_{n-1}$ such that $s \transition{\act} s' \transition{\act_1} s'_1 \transition{\act_2} \dots \transition{\act_n} s'_n$.
	Furthermore, if $\act$ is invisible, then $s_i \transition{\act} s'_i$ for every $1 \leq i < n$.
\end{description}

This new condition \textbf{D1'} provides a form of \emph{local} consistent labelling when one of $\act_1,\dots,\act_n$ is visible.
In this case, \textbf{V} implies that $\act$ is invisible and, consequently, the presence of transitions $s_i \transition{\act} s'_i$ implies $L(s_i) = L(s'_i)$.
Hence, the problematic cases of Figure~\ref{fig:paths_d1_a_invisible} are resolved; a correctness proof is given below.

Condition \textbf{D1'} is very similar to condition \textbf{C1}~\cite{Gerth1999}, which is common in the context of ample sets.
However, \textbf{C1} requires that action $\act$ is \emph{globally} independent of each of the actions $\act_1,\dots,\act_n$, while \textbf{D1'} merely requires a kind of \emph{local} independence.
In practice, though, most, if not all, implementations compute a global independence relation.
Persistent sets~\cite{Godefroid1996} also rely on a condition similar to \textbf{D1'}, and require local independence.

\subsection{Correctness}
\label{sec:correctness}
To show that \textbf{D1'} indeed resolves the inconsistent labelling problem, we reproduce the construction in the original proof~\cite[Construction 1]{Valmari1992} in two lemmata and show that it preserves stutter equivalence.
Below, we often use $\rededgerel$ to indicate which transitions must occur in the reduced state space.
\begin{lemma}
	\label{lmm:shift_action_forward}
	Let $\redf$ be a weak stubborn set and $\pi = s_0 \transition{\act_1} \dots \transition{\act_n} s_n \transition{\act} s'_n$ be a path such that $\act_1,\dots,\act_n \notin \redf(s_0)$ and $\act \in \redf(s_0)$.
	Then, there is a path $\pi' = s_0 \redtransition{\act} s'_0 \transition{\act_1} \dots \transition{\act_n} s'_n$ such that $\pi$ and $\pi'$ are stutter equivalent.
\end{lemma}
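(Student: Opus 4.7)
The plan is to apply condition \textbf{D1'} to obtain the commuting diamond of transitions witnessing $\pi'$, and then to verify stutter equivalence by a case analysis on the visibility of $\act$. The first step is immediate: since $\act \in \redf(s_0)$ and $\act_1,\dots,\act_n \notin \redf(s_0)$, \textbf{D1'} yields intermediate states $s'_0,\dots,s'_{n-1}$ with $s_0 \transition{\act} s'_0 \transition{\act_1} s'_1 \transition{\act_2} \cdots \transition{\act_n} s'_n$. The transition $s_0 \transition{\act} s'_0$ can be taken in the reduced LSTS, justifying the notation $s_0 \redtransition{\act} s'_0$. It only remains to show that the traces $L(s_0) L(s_1) \cdots L(s_n) L(s'_n)$ and $L(s_0) L(s'_0) L(s'_1) \cdots L(s'_n)$ collapse to the same no-stutter trace.

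If $\act$ is visible, I would invoke \textbf{V}: since $\act \in \redf(s_0)$ is an enabled visible action, every visible action lies in $\redf(s_0)$, hence each $\act_i$ (being outside $\redf(s_0)$) is invisible. Therefore all states along the top row carry the label $L(s_0)$, and all states $s'_0,\dots,s'_n$ along the bottom row carry the same label as $s'_0$. The trace of $\pi$ is then $L(s_0)^{n+1} L(s'_n)$ and the trace of $\pi'$ is $L(s_0) L(s'_0)^{n+1}$; since $L(s'_0) = L(s'_n)$, both reduce to the no-stutter sequence obtained by keeping $L(s_0)$ followed by $L(s'_0)$ when these differ, and by $L(s_0)$ alone when they coincide.

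If $\act$ is invisible, I would use the additional clause of \textbf{D1'}: together with the endpoint transitions $s_0 \transition{\act} s'_0$ and $s_n \transition{\act} s'_n$, it gives $s_i \transition{\act} s'_i$ for every $0 \leq i \leq n$. Invisibility of $\act$ then yields $L(s_i) = L(s'_i)$ for all such $i$. Consequently, the first step of $\pi'$ is a stutter step ($L(s_0) = L(s'_0)$) and the last step of $\pi$ is a stutter step ($L(s_n) = L(s'_n)$); after collapsing these, both traces become $L(s_0) L(s_1) \cdots L(s_n) = L(s'_0) L(s'_1) \cdots L(s'_n)$, whose no-stutter reductions obviously agree.

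The only genuinely delicate point is the invisible case: without the strengthened part of \textbf{D1'} providing the intermediate $\act$-transitions, one could not match the labels $L(s_i)$ and $L(s'_i)$ pointwise, and the argument would break down exactly at the counter-example of Section~\ref{sec:counter_example}. Everything else is routine bookkeeping on labels and the standard application of \textbf{V}; no inductive argument is needed because \textbf{D1'} already supplies the full diamond in one shot.
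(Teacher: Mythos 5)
Your proof is correct and follows essentially the same route as the paper's: obtain the diamond from \textbf{D1'}, then split on the visibility of $\act$, using \textbf{V} to force all $\act_i$ invisible in the visible case and the strengthened clause of \textbf{D1'} to match $L(s_i)=L(s'_i)$ pointwise in the invisible case. The only (harmless) difference is that you case-split on whether $\act$ is visible, whereas the paper phrases it as ``$\act$ invisible'' versus ``all $\act_i$ invisible''; the two decompositions cover the same ground.
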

\begin{proof}
	The existence of $\pi'$ follows directly from condition \textbf{D1'}.
	Due to condition \textbf{V} and our assumption that $\act_1,\dots,\act_n \notin \redf(s_0)$, it cannot be the case that $\act$ is visible and at least one of $\act_1,\dots,\act_n$ is visible.
	If $\act$ is invisible, then the traces of $s_0 \transition{\act_1} \dots \transition{\act_n} s_n$ and $s'_0 \transition{\act_1} \dots \transition{\act_n} s'_n$ are equivalent, since \textbf{D1'} implies that $s_i \transition{\act} s'_i$ for every $0 \leq i \leq n$, so $L(s'_i) = L(s_i)$.
	Otherwise, if all of $\act_1,\dots,\act_n$ are invisible, then the sequences of labels observed along $\pi$ and $\pi'$ have the shape $L(s_0)^{n+1} L(s'_0)$ and $L(s_0) L(s'_0)^{n+1}$, respectively.
	We conclude that $\pi$ and $\pi'$ are stutter equivalent.
	\qed
\end{proof}
\begin{lemma}
	\label{lmm:introduce_key_action}
	Let $\redf$ be a weak stubborn set and $\pi = s_0 \transition{\act_1} s_1 \transition{\act_2} \dots$ be a path such that $\act_i \notin \redf(s_0)$ for any $\act_i$ that occurs in $\pi$.
	Then, the following holds:
	\begin{itemize}
		\item If $\pi$ is of finite length $n > 0$, there exist an action $\keyact$, a state $s'_n$ such that $s_n \transition{\keyact} s'_n$ and a path $\pi' = s_0 \redtransition{\keyact} s'_0 \transition{\act_1} \dots \transition{\act_n} s'_n$.
		\item If $\pi$ is infinite, there exists a path $\pi' = s_0 \redtransition{\keyact} s'_0 \transition{\act_1} s'_1 \transition{\act_2} \dots$ for some action $\keyact$.
	\end{itemize}
	In either case, $\pi \stuteq \pi'$.
\end{lemma}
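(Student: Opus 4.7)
The plan is to identify an invisible key action $\keyact \in \redf(s_0)$ that stays enabled throughout $\pi$, and then promote the single-step shift of Lemma~\ref{lmm:shift_action_forward} to the full path $\pi'$, treating the finite and infinite cases separately.

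I would obtain $\keyact$ by a case split on the visibility of $\act_1$. If $\act_1$ is invisible, condition \textbf{I} directly yields an invisible key action in $\redf(s_0)$. If $\act_1$ is visible, then $\act_1 \notin \redf(s_0)$ combined with condition \textbf{V} forces $\redf(s_0)$ to contain no enabled visible action; since $\enabled(s_0) \neq \emptyset$, condition \textbf{D2w} still supplies an enabled key action, which must therefore be invisible. In both cases, by the definition of a key action together with $\act_i \notin \redf(s_0)$, this $\keyact$ remains enabled in every $s_i$ visited by $\pi$.

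For the finite case, I would pick any $s'_n$ with $s_n \transition{\keyact} s'_n$ and apply Lemma~\ref{lmm:shift_action_forward}, with $\keyact$ in the role of $\act$, to the one-step extension $s_0 \transition{\act_1} \dots \transition{\act_n} s_n \transition{\keyact} s'_n$. This immediately gives a path $\pi' = s_0 \redtransition{\keyact} s'_0 \transition{\act_1} \dots \transition{\act_n} s'_n$ stutter-equivalent to the extension. Since $\keyact$ is invisible, $L(s_n) = L(s'_n)$, so appending $\keyact$ to $\pi$ does not alter the no-stutter trace, and $\pi \stuteq \pi'$ follows by transitivity.

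The infinite case is where I expect the main obstacle. A naive inductive construction fails in non-deterministic LSTSs, because separate applications of \textbf{D1'} to increasingly long prefixes may produce different choices for the intermediate states $s'_i$. I would resolve this with a König-style compactness argument: form the forest whose nodes at depth $n$ are tuples $(s'_0, \ldots, s'_n)$ with $s_0 \transition{\keyact} s'_0 \transition{\act_1} \ldots \transition{\act_n} s'_n$ and $s_i \transition{\keyact} s'_i$ for every $i \leq n$. The finite case guarantees that this forest has nodes at every depth, and under the standard finite-branching assumption on the LSTS, it is finitely branching, so König's lemma yields an infinite branch, which is the desired $\pi'$. Stutter equivalence of the two infinite paths then follows as before: the invisibility of $\keyact$ together with condition \textbf{D1'} gives $L(s_i) = L(s'_i)$ for every $i$, so the traces of $\pi$ and $\pi'$ coincide once the duplicated initial label $L(s_0)$ is absorbed by the no-stutter projection.
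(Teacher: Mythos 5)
Your proposal is correct and follows essentially the same route as the paper: the same case split on the visibility of $\act_1$ (using \textbf{I}, respectively \textbf{D2w} plus \textbf{V}) to obtain an invisible key action, the same use of \textbf{D1'} to shift it to the front while preserving the labels $L(s_i)=L(s'_i)$, and for the infinite case the same compactness argument, which the paper carries out by hand as a repeated pigeonhole selection justified by the finiteness of the reduced state space rather than by naming K\"onig's lemma over an explicit finite-branching assumption. The only cosmetic difference is that you reuse Lemma~\ref{lmm:shift_action_forward} for the finite case and for the stutter-equivalence conclusion, where the paper re-derives the path from \textbf{D2w} and \textbf{D1'} and says ``same reasoning as Lemma~\ref{lmm:shift_action_forward}''; both are fine.
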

\begin{proof}
	Let $K$ be the set of key actions in $s$.
	If $\act_1$ is invisible, $K$ contains at least one invisible action, due to \textbf{I}.
	Otherwise, if $\act_1$ is visible, we reason that $K$ is not empty (condition \textbf{D2w}) and all actions in $\redf(s_0)$, and thus also all actions in $K$, are invisible, due to \textbf{V}.
	In the remainder, let $\keyact$ be an invisible key action.	
	
	In case $\pi$ has finite length $n$, the existence of $s_n \transition{\keyact} s'_n$ and $s_0 \redtransition{\keyact} s'_0 \transition{\act_1} \dots \transition{\act_n} s'_n$ follows from \textbf{D2w} and \textbf{D1'}, respectively.
	
	If $\pi$ is infinite, we can apply \textbf{D2w} and \textbf{D1'} successively to obtain a path $\pi_i = s_0 \transition{\keyact} s'_0 \transition{\act_1} \dots \transition{\act_i} s'_i$ for every $i \geq 0$.
	Since the reduced state space is finite, infinitely many of these paths must use the same state as $s'_0$.
	At most one of them ends at $s'_0$ (the one with $i = 0$), so infinitely many continue from $s'_0$.
	Of them, infinitely many must use the same $s'_1$, again because the reduced state space is finite.
	Again, at most one of them is lost because of ending at $s'_1$.
	This reasoning can continue without limit, proving the existence of $\pi' = s_0 \redtransition{\keyact} s'_0 \transition{\act_1} s'_1 \transition{\act_2} \dots$.
	
	Since $\keyact$ is invisible, we use the same reasoning as in the proof of Lemma~\ref{lmm:shift_action_forward} to conclude $\pi \stuteq \pi'$.
	\qed
\end{proof}

Lemmata~\ref{lmm:shift_action_forward} and~\ref{lmm:introduce_key_action} coincide with branches 1 and 2 of~\cite[Construction 1]{Valmari1992}, respectively, but contain the stronger result that $\pi \stuteq \pi'$.
Thus, when applied in the proof of~\cite[Theorem 2]{Valmari1992} (see also Theorem~\ref{thm:d1_preserve_stutter_trace_equivalence}), this yields the result that stubborn sets with condition \textbf{D1'} preserve stutter-trace equivalence.

\begin{theorem}
	\label{thm:correctness_d1'}
	Given an LSTS $\TS$ and weak/strong stubborn set $r$, where condition \textbf{D1} is replaced by \textbf{D1'}, then the reduced LSTS $\TS_r$ is stutter-trace equivalent to $\TS$.
\end{theorem}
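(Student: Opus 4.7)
The plan is to follow the same skeleton as the original argument for~\cite[Theorem 2]{Valmari1992}, but to substitute every appeal to \cite[Construction 1]{Valmari1992} by the corresponding strengthened statement (Lemma~\ref{lmm:shift_action_forward} or Lemma~\ref{lmm:introduce_key_action}), so that the stronger ``$\stuteq$'' relation — rather than mere weak equivalence — propagates through the construction.

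The direction $\TS_r \Rightarrow \TS$ is immediate: every initial path of $\TS_r$ is also an initial path of $\TS$, since $\TS_r$ is a subgraph of $\TS$ with identical state labels, so it is trivially stutter equivalent to itself. For the non-trivial direction, I would fix an arbitrary initial path $\pi$ in $\TS$ and inductively construct a stutter-equivalent initial path $\pi_r$ in $\TS_r$. The construction maintains, at stage $i$, a finite prefix $\sigma_i$ in $\TS_r$ (of length at least $i$) ending in some state $s_i$, together with a tail $\tau_i$ in $\TS$ starting from $s_i$, such that $\sigma_i \tau_i \stuteq \pi$. Starting from $\sigma_0 = \init{s}$ and $\tau_0 = \pi$, I would perform one of two moves at stage $i$: if some action occurring in $\tau_i$ belongs to $\redf(s_i)$, apply Lemma~\ref{lmm:shift_action_forward} to shift the first such occurrence to the front, committing the resulting $\rededgerel$-step to $\sigma_{i+1}$; otherwise, apply Lemma~\ref{lmm:introduce_key_action} to prepend an invisible key action, again committing the resulting $\rededgerel$-step to $\sigma_{i+1}$. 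In both cases, the lemmas guarantee that the new decomposition $\sigma_{i+1} \tau_{i+1}$ is stutter equivalent to $\sigma_i \tau_i$, and hence to $\pi$ by transitivity of $\stuteq$.

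The main obstacle is to prove that this iterative process converges to a single path $\pi_r$ that is stutter equivalent to $\pi$ in its entirety, rather than only ``up to each finite prefix''. Several cases arise. For finite $\pi$, if the second move is ever applied with $\tau_i$ finite and non-empty, the prepended key action is consumed immediately and the length of the suffix remaining from $\pi$ is non-increasing; a straightforward termination argument then yields a finite $\pi_r$ with $\pi_r \stuteq \pi$. For infinite $\pi$, the delicate scenario is indefinite application of the second move, which corresponds precisely to the \emph{action-ignoring} problem. Here condition \textbf{L}, together with finiteness of $\TS_r$, is crucial: the $\rededgerel$-extensions produced by consecutive applications of Lemma~\ref{lmm:introduce_key_action} would trace a cycle in $\TS_r$, so every visible action of $\tau_i$ must eventually appear in some stubborn set along that cycle, forcing Lemma~\ref{lmm:shift_action_forward} to apply. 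If from some point on $\tau_i$ contains only invisible actions, I would instead invoke the infinite branch of Lemma~\ref{lmm:introduce_key_action} directly on the infinite tail, obtaining an infinite $\rededgerel$-path stutter equivalent to $\tau_i$.

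Combining these cases gives a path $\pi_r$ in $\TS_r$ with $\pi_r \stuteq \pi$, establishing stutter-trace equivalence of $\TS$ and $\TS_r$. \qed
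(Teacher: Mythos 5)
Your overall strategy---iterating the two moves given by Lemmata~\ref{lmm:shift_action_forward} and~\ref{lmm:introduce_key_action} while maintaining the invariant $\sigma_i\tau_i \stuteq \pi$---is precisely the construction the paper intends: it substitutes the strengthened lemmata into \cite[Construction~1]{Valmari1992} and, like you, does not reproduce the limit argument in full. So the skeleton is right. However, two steps in your convergence discussion do not work as written. First, for an infinite $\pi$ whose tail retains a visible action, you argue that condition \textbf{L} and the cycle in $\TS_r$ force Lemma~\ref{lmm:shift_action_forward} to apply. True, but insufficient: move~1 commits the \emph{first} action of $\tau_i$ lying in $\redf(s_i)$, which may well be an invisible action occurring \emph{before} the visible one, so the visible action---and the label change it carries---might never be committed to $\sigma_i$, leaving the no-stutter trace of the limit path a strict prefix of that of $\pi$. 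The missing ingredient is a decreasing measure: the position of the first visible action in $\tau_i$ never increases and strictly decreases whenever move~1 fires at a state whose stubborn set contains it or an earlier tail action; if it were never committed, it would eventually be constant, and the resulting cycle in $\TS_r$ would avoid that action in every stubborn set, contradicting \textbf{L}.

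Second, ``invoking the infinite branch of Lemma~\ref{lmm:introduce_key_action} directly on the infinite tail'' does not produce ``an infinite $\rededgerel$-path'': the lemma's conclusion $s_0 \redtransition{\keyact} s'_0 \transition{\act_1} s'_1 \transition{\act_2} \cdots$ contains exactly one reduced transition; the remainder lives in $\TS$, not in $\TS_r$. In the all-invisible-tail case you must still iterate the two moves; what rescues the argument is that every action committed from that point on is invisible (the key actions by \textbf{I}/\textbf{V}, the shifted actions because the whole tail is invisible), so the constructed infinite $\rededgerel$-path has constant labelling and stutter equivalence is immediate. Relatedly, your termination claim for finite $\pi$ is not ``straightforward'': move~2 leaves the tail length unchanged, so a non-increasing suffix length does not terminate by itself. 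Either stop as soon as the remaining tail is entirely invisible (then $\sigma_i \stuteq \pi$ already and both are finite), or appeal once more to \textbf{L} to rule out applying move~2 forever. With these repairs the proof goes through and matches the paper's intended argument.
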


We do not reproduce the complete proof, but provide insight into the application of the lemmata with the following example.

\begin{example}
	Consider the path obtained by following $\act_1 \act_2 \act_3$ in Figure~\ref{fig:example_mimicking}.
	Lemmata~\ref{lmm:shift_action_forward} and~\ref{lmm:introduce_key_action} show that $\act_1 \act_2 \act_3$ can always be mimicked in the reduced LSTS, while preserving stutter equivalence.
	In this case, the path is mimicked by the path corresponding to $\keyact \act_2 \act_1 \keyact' \act_3$, drawn with dashes.
	The new path reorders the actions $\act_1$, $\act_2$ and $\act_3$ according to the construction of Lemma~\ref{lmm:shift_action_forward} and introduces the key actions $\keyact$ and $\keyact'$ according to the construction of Lemma~\ref{lmm:introduce_key_action}.
	\qed
	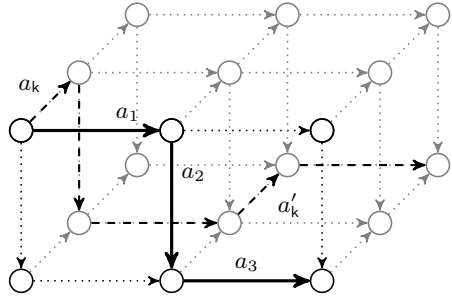
\begin{figure}
		\centering
		\begin{tikzpicture}[->,>=stealth',shorten >=0pt,auto,node distance=2.0cm,semithick]
		\tikzstyle{state} = [draw,circle]
		\def\d{2}
		\def\dx{\d}
		\def\dy{-\d}
		\def\dz{-\d}
		\node[state] (000) at (0,0,0) {};
		\node[state] (100) at (\d,0,0) {};
		\node[state] (200) at (2*\d,0,0) {};
		\foreach \x in {0,1,2} {
			\foreach \y in {0,1} {
				\node[state] (\x\y0) at (\x*\dx,\y*\dy,0) {};
			}
		}
		\foreach \x in {0,1,2} {
			\foreach \y in {0,1} {
				\foreach \z in {1,2} {
					\node[state,gray] (\x\y\z) at (\x*\dx,\y*\dy,\z*\dz) {};
				}
			}
		}
		\foreach \y in {0,1} {
			\path[dotted] (0\y0) edge (1\y0)
			(1\y0) edge (2\y0);
		}
		\foreach \x in {0,1,2} {
			\path[dotted] (\x00) edge (\x10);
		}theorem
		\foreach \x in {0,1,2} {
			\foreach \y in {0,1} {
				\path[dotted,gray] (\x\y0) edge (\x\y1)
				(\x\y1) edge (\x\y2);
			}
		}
		\foreach \y in {0,1} {
			\foreach \z in {1,2} {
				\path[dotted,gray] (0\y\z) edge (1\y\z)
				(1\y\z) edge (2\y\z);
			}
		}
		\foreach \x in {0,1,2} {
			\foreach \z in {1,2} {
				\path[dotted,gray] (\x0\z) edge (\x1\z);
			}
		}
		\draw[very thick] (000) edge node[near end] {$\act_1$} (100)
		(100) edge node[near start] {$\act_2$} (110)
		(110) edge node {$\act_3$} (210);
		\draw[dashed,thick] (000) edge node {$\keyact$} (001)
		(001) edge (011)
		(011) edge (111)
		(111) edge node[near end,'] {$\keyact'$} (112)
		(112) edge (212);
		\end{tikzpicture}
		\caption{
			Example of how the trace $\act_1$, $\act_2$, $\act_3$ can be mimicked by introducing additional actions and moving $\act_2$ to the front (dashed trace).
			Transitions that are drawn in parallel have the same label.
		}
		\label{fig:example_mimicking}
	\end{figure}
\end{example}

We remark that Lemma~\ref{lmm:introduce_key_action} also holds if the reduced LSTS is infinite, but finitely branching.
Therefore, it also follows from Lemma~\ref{lmm:introduce_key_action} that the combination of conditions \textbf{\"A1} and \textbf{\"A2} implies \textbf{\"A3} (all from \cite{Valmari1997}) in finitely branching LSTSs.

\subsection{Deterministic LSTSs}
\label{sec:deterministic_lstss}
As already noted in Section~\ref{sec:counter_example}, strong stubborn sets for deterministic systems do not suffer from the inconsistent labelling problem.
The following lemma, which also appeared in~\cite[Lemma 4.2]{Valmari2017b}, shows why.

\begin{lemma}
	\label{lmm:det_lsts_d1'_implied}
	For deterministic LSTSs, conditions \textbf{D1} and \textbf{D2} together imply \textbf{D1'}.
\end{lemma}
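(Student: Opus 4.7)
\medskip

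The plan is to show that, under \textbf{D1} and \textbf{D2}, the only new ingredient in \textbf{D1'} relative to \textbf{D1}, namely the existence of the vertical transitions $s_i \transition{\act} s'_i$ for $1 \leq i < n$, follows essentially for free in a deterministic LSTS, and in fact holds regardless of whether $\act$ is visible or invisible. Fix the hypotheses of \textbf{D1'}: let $\act \in \redf(s)$, $\act_1,\dots,\act_n \notin \redf(s)$, and suppose $s \transition{\act_1} s_1 \transition{\act_2} \dots \transition{\act_n} s_n \transition{\act} s'_n$. The existence of intermediate states $s',s'_1,\dots,s'_{n-1}$ with $s \transition{\act} s' \transition{\act_1} s'_1 \transition{\act_2} \dots \transition{\act_n} s'_n$ is granted immediately by \textbf{D1}.

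It remains to establish $s_i \transition{\act} s'_i$ for each $1 \leq i < n$. Fix such an $i$. First, \textbf{D2} applied to the prefix $s \transition{\act_1\dots\act_i} s_i$ (using $\act \in \redf(s)$ and $\act_1,\dots,\act_i \notin \redf(s)$) yields a state $t_i$ with $s_i \transition{\act} t_i$. Now apply \textbf{D1} to the shorter path $s \transition{\act_1} s_1 \transition{\act_2} \dots \transition{\act_i} s_i \transition{\act} t_i$: this produces states $u_0, u_1, \dots, u_{i-1}$ such that $s \transition{\act} u_0 \transition{\act_1} u_1 \transition{\act_2} \dots \transition{\act_i} t_i$. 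The goal is now to show $t_i = s'_i$; this is where determinism enters.

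Determinism is exploited by a short induction along the bottom row. Since $s \transition{\act} s'$ and $s \transition{\act} u_0$, determinism gives $u_0 = s'$. Assuming $u_j = s'_j$ for some $j < i$, we have $u_j \transition{\act_{j+1}} u_{j+1}$ and $s'_j \transition{\act_{j+1}} s'_{j+1}$ from a common source, so $u_{j+1} = s'_{j+1}$. This propagates to $u_{i-1} = s'_{i-1}$, and then $s'_{i-1} \transition{\act_i} s'_i$ together with $u_{i-1} \transition{\act_i} t_i$ forces $t_i = s'_i$ once more by determinism. Hence $s_i \transition{\act} s'_i$, which is precisely what \textbf{D1'} demands.

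The proof involves essentially no obstacles beyond bookkeeping: \textbf{D1} supplies the diagram, \textbf{D2} supplies the candidate vertical transitions, and determinism then glues them to the already-constructed bottom row. The only point that deserves some care is making sure the appeal to \textbf{D1} on the shorter prefix does not require anything about $\act$ that we have not assumed; in particular, the argument goes through whether $\act$ is visible or not, so the conclusion is in fact (slightly) stronger than the invisibility clause in \textbf{D1'} literally requires.
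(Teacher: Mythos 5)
Your proof is correct and follows essentially the same route as the paper's: apply \textbf{D1} to get the shifted path, apply \textbf{D2} to each prefix to obtain candidate vertical transitions, apply \textbf{D1} again to each shortened path, and use determinism to identify the resulting path with a prefix of the first one. You merely spell out the determinism step as an explicit induction where the paper states it in one line; the content is identical.
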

\begin{proof}
	Let \TS be a deterministic LSTS, $\pi = s_0 \transition{\act_1} s_1 \transition{\act_2} \dots \transition{\act_n} s_n \transition{\act} s'_n$ a path in \TS and $\redf$ a reduction function that satisfies \textbf{D1} and \textbf{D2}.
	Furthermore, assume that $\act \in \redf(s_0)$ and $\act_1,\dots,\act_n \notin \redf(s_0)$.
	By applying \textbf{D1}, we obtain the path $\pi' = s_0 \transition{\act} s'_0 \transition{\act_1} \dots \transition{\act_n} s'_n$, which satisfies the first part of condition \textbf{D1'}.
	With \textbf{D2}, we have $s_i \transition{a} s^i_i$ for every $1 \leq i \leq n$.
	Then, we can also apply \textbf{D1} to every path $s_0 \transition{\act_1} \dots \transition{\act_i} s_i \transition{a} s^i_i$ to obtain paths $\pi_i = s_0 \transition{\act} s^i_0 \transition{\act_1} s^i_1 \transition{\act_2} \dots \transition{\act_i} s^i_i$.
	Since \TS is deterministic, every path $\pi_i$ must coincide with a prefix of $\pi'$.
	We conclude that $s^i_i = s'_i$ and so the requirement that $s_i \transition{\act} s'_i$ for every $1 \leq i \leq n$ is also satisfied.
	\qed
\end{proof}

\section{Safe Logics}
\label{sec:safe_formalisms}
In this section, we will identify two logics, \viz reachability and CTL$_{-X}$, which are not affected by the inconsistent labelling problem.
This is either due to their limited expressivity or the extra POR conditions that are required.

\subsection{Reachability properties}
Although the counter-example of Section~\ref{sec:counter_example} shows that stutter-trace equivalence is in general not preserved by stubborn sets, some fragments of LTL$_{-X}$ are preserved.
One such class of properties is reachability properties, which are of the shape $\square f$ or $\Diamond f$, where $f$ is a formula not containing temporal operators.

\begin{theorem}
	\label{thm:preserve_reachability}
	Let $\TS$ be an LSTS and $\redf$ a reduction function that satisfies either \textbf{D0}, \textbf{D1}, \textbf{D2}, \textbf{V} and \textbf{L} or \textbf{D1}, \textbf{D2w}, \textbf{V} and \textbf{L}.
	The reduced LSTS induced by $\redf$ is called $\TS_r$.
	For all possible labellings $l \subseteq \AP$, $\TS$ contains a path to a state $s$ such that $L(s) = l$ iff $\TS_r$ contains a path to a state $s'$ such that $L(s') = l$.
\end{theorem}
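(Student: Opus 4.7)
The $\Leftarrow$ direction is immediate, since $\TS_r$ is a subgraph of $\TS$. For $\Rightarrow$, I would strengthen the claim to the following: for every $s \in \TS_r$ and every finite path $\pi = s \transition{a_1} \dots \transition{a_n} t$ in $\TS$, there is a path in $\TS_r$ from $s$ to a state with label $L(t)$. Applying this strengthening to $\init{s}$ immediately yields the theorem, so the task reduces to proving it by induction on $n$.

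The base case $n = 0$ is trivial. For the step, I would case-split on whether some $a_i$ lies in $\redf(s)$. If one does, let $i$ be smallest: then \textbf{D1}, applied to the prefix $s \transition{a_1} \dots \transition{a_{i-1}} s_{i-1} \transition{a_i} s_i$, produces a rearranged path $s \transition{a_i} s^* \transition{a_1} \dots \transition{a_{i-1}} s_i$, and since $a_i \in \redf(s)$ we have $s^* \in \TS_r$. The induction hypothesis applied to the length-$(n-1)$ path from $s^*$ to $t$ then closes this case. If no $a_i$ lies in $\redf(s)$ and every $a_i$ is invisible, then $L(s) = L(t)$ and the empty path from $s$ suffices. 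Otherwise some $a_i$ is visible, so by the contrapositive of \textbf{V}, $\redf(s)$ contains no enabled visible action; the key action $a_k \in \redf(s)$ supplied by \textbf{D2w} is hence invisible and, again by \textbf{D2w}, enabled at $t$, so $t \transition{a_k} t'$ with $L(t') = L(t)$. Applying \textbf{D1} then shifts $a_k$ to the front, yielding $s \transition{a_k} s^* \transition{a_1} \dots \transition{a_n} t'$ with $s^* \in \TS_r$.

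The main obstacle is that this last subcase does not decrease $n$, so induction on length alone does not suffice. Iterating this subcase generates a sequence $s^{(0)} = s, s^{(1)} = s^*, s^{(2)}, \dots$ of states in $\TS_r$ linked by invisible key-action transitions, from each of which the same sequence $a_1 \dots a_n$ is enabled in $\TS$ and still reaches a state labelled $L(t)$. Because $\TS_r$ is finite, this sequence either hits a state where the first case applies (letting the induction hypothesis finish the argument) or eventually revisits a state, closing a cycle in $\TS_r$ whose states are all assumed to be in the second subcase. Condition \textbf{L}, applied to any visible action among $a_1, \dots, a_n$, forces some state on this cycle to contain that action in its stubborn set, contradicting the second-subcase assumption. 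Hence the iteration must terminate in the first case, and the induction goes through. The strong-stubborn-set variant is subsumed because \textbf{D0} together with \textbf{D2} imply \textbf{D2w}.
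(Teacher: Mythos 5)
Your proof is correct and follows essentially the same route as the paper's: the trivial subgraph direction, then mimicking the path by shifting a stubborn-set action forward with \textbf{D1} when possible, otherwise prepending an invisible key action obtained from \textbf{D2w} (or \textbf{D0}+\textbf{D2}) and \textbf{V}, with condition \textbf{L} ruling out infinitely many key-action steps. Your explicit cycle-based termination argument merely spells out what the paper compresses into ``the second case cannot be repeated infinitely often, due to condition \textbf{L}.''
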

\begin{proof}
	The `if' case is trivial, since $\TS_r$ is a subgraph of $\TS$.
	For the `only if' case, we reason as follows.
	Let $\TS = (S, \edgerel, \init{s}, L)$ be an LSTS and $\pi = s_0 \transition{\act_1} \dots \transition{\act_n} s_n$ a path such that $s_0 = \init{s}$.
	We mimic this path by repeatedly taking some enabled action $\act$ that is in the stubborn set, according to the following schema.
	Below, we assume the path to be mimicked contains at least one visible action.
	Otherwise, its first state would have the same labelling as $s_n$.
	\begin{enumerate}
		\item If there is an $i$ such that $\act_i \in \redf(s_0)$, we consider the smallest such $i$, \ie, $\act_1,\dots,\act_{i-1} \notin \redf(s_0)$.
		Then, we can shift $\act_i$ forward by \textbf{D1}, move towards $s_n$ along $s_0 \transition{\act_i} s'_0$ and continue by mimicking $s'_0 \transition{\act_1} \dots \transition{\act_{i-1}} s_i \transition{\act_{i+1}} \dots \transition{\act_n} s_n$.
		\item If all of $\act_1,\dots,\act_n \notin \redf(s_0)$, then, by \textbf{D0} and \textbf{D2} or by \textbf{D2w}, we can take a key action $\keyact$ to a state $s'_0$ and continue mimicking the path $s'_0 \transition{\act_1} s'_1 \transition{\act_2} \dots \transition{\act_n} s'_n$.
		Note that $L(s_n) = L(s'_n)$, since $\keyact$ is invisible by condition \textbf{V}.
	\end{enumerate}
	The second case cannot be repeated infinitely often, due to condition \textbf{L}.
	Hence, after a finite number of steps, we reach a state $s'_n$ with $L(s'_n) = L(s_n)$.
	\qed
\end{proof}

We remark that more efficient mechanisms for reachability checking under POR have been proposed, such as condition \textbf{S}~\cite{Valmari2017a}, which can replace \textbf{L}, or conditions based on \emph{up-sets}~\cite{Schmidt2000}.
Another observation is that model checking of safety properties can be reduced to reachability checking by computing the cross-product of a B\"uchi automaton and an LSTS~\cite{BaierKatoen-PMC}, in the process resolving the inconsistent labelling problem.
Peled~\cite{Peled1996} shows how this approach can be combined with POR.

\subsection{Deterministic LSTSs and CTL$_{-X}$ Model Checking}
\label{sec:ctl-x}
In this section, we will consider the inconsistent labelling problem in the setting of CTL$_{-X}$ model checking.
When applying stubborn sets in that context, stronger conditions are required to preserve the branching structure that CTL$_{-X}$ reasons about.
Namely, the original LSTS must be deterministic and one more condition needs to be added~\cite{Gerth1999}:
\begin{description}[leftmargin=!,labelwidth=\widthof{\bfseries C4}]
	\item[\textbf{C4}] Either $\redf(s) = \Act$ or $\redf(s)$ contains precisely one enabled action.
\end{description}
We slightly changed its original formulation to match the setting of stubborn sets.
A weaker condition, called \textbf{\"A8}, which does not require determinism of the whole LSTS is proposed in~\cite{Valmari1997}.
With \textbf{C4}, strong and weak stubborn sets collapse, as shown by the following lemma.

\begin{lemma}
	\label{lmm:c4_implies_d0_d2}
	Conditions \textbf{D2w} and \textbf{C4} together imply \textbf{D0} and \textbf{D2}.
\end{lemma}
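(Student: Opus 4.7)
The plan is to derive both \textbf{D0} and \textbf{D2} by invoking \textbf{D2w} (which always produces a witness action $\act^\star \in \redf(s)$) and then using \textbf{C4} to pin down what this witness must be in relation to the action $\act$ appearing in \textbf{D2}.

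First I would handle \textbf{D0}, which is almost immediate. Assume $\enabled(s) \neq \emptyset$. Then \textbf{D2w} gives an action $\act^\star \in \redf(s)$ such that for every path $s \transition{\act_1 \dots \act_n} s'$ with $\act_1,\dots,\act_n \notin \redf(s)$, we have $s' \transition{\act^\star}$. Instantiating this with the empty path ($n = 0$, so $s' = s$) yields $s \transition{\act^\star}$, so $\act^\star \in \redf(s) \cap \enabled(s)$, establishing \textbf{D0}. This observation also shows that any witness produced by \textbf{D2w} is in fact enabled.

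Next I would prove \textbf{D2} by a case split on \textbf{C4}. Fix $\act \in \redf(s)$ with $s \transition{\act}$ and a path $s \transition{\act_1 \dots \act_n} s'$ with $\act_1,\dots,\act_n \notin \redf(s)$. If $\redf(s) = \Act$, then the assumption $\act_1,\dots,\act_n \notin \redf(s)$ forces $n = 0$, hence $s' = s$, and $s' \transition{\act}$ holds by assumption. Otherwise $\redf(s)$ contains precisely one enabled action. Since $\act$ is enabled and lies in $\redf(s)$, and since the witness $\act^\star$ from \textbf{D2w} is also enabled and lies in $\redf(s)$ (by the observation above), uniqueness yields $\act^\star = \act$. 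Applying the \textbf{D2w} property of $\act^\star$ to the path $s \transition{\act_1 \dots \act_n} s'$ then gives $s' \transition{\act}$, which is exactly \textbf{D2}.

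There is no real obstacle here; the only subtlety is noticing that every witness of \textbf{D2w} is automatically enabled (via the empty-path instance), which is what lets uniqueness in \textbf{C4} identify the witness with the arbitrary $\act \in \redf(s)$ appearing in \textbf{D2}.
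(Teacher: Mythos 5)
Your proof is correct and follows essentially the same route as the paper: the same case split on \textbf{C4}, with the unique enabled action in $\redf(s)$ identified with the \textbf{D2w} witness (i.e., shown to be a key action), and the $\redf(s) = \Act$ case collapsing to $n = 0$. The only cosmetic difference is that you derive \textbf{D0} from the empty-path instance of \textbf{D2w}, whereas the paper notes it follows trivially from \textbf{C4} alone; both are valid one-liners.
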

\begin{proof}
	Let \TS be an LSTS, $s$ a state and $\redf$ a reduction function that satisfies \textbf{D2w} and \textbf{C4}.
	Condition \textbf{D0} is trivially implied by \textbf{C4}.
	Using \textbf{C4}, we distinguish two cases: either $\redf(s)$ contains precisely one enabled action $\act$, or $\redf(s) = \Act$.
	In the former case, this single action $\act$ must be a key action, according to \textbf{D2w}.
	Hence, \textbf{D2}, which requires that all enabled actions in $\redf(s)$ are key actions, is satisfied.
	Otherwise, if $\redf(s) = \Act$, we consider an arbitrary action $\act$ that satisfies \textbf{D2}'s precondition that $s \transition{\act}$.
	Given a path $s \transition{\act_1 \dots \act_n}$, the condition that $\act_1,\dots,\act_n \notin \redf(s)$ only holds if $n = 0$.
	We conclude that \textbf{D2}'s condition $s \transition{\act_1 \dots \act_n \act}$ is satisfied by the assumption $s \transition{\act}$.
	\qed
\end{proof}

It follows from Lemmata~\ref{lmm:det_lsts_d1'_implied} and \ref{lmm:c4_implies_d0_d2} and Theorem~\ref{thm:correctness_d1'} that CTL$_{-X}$ model checking of deterministic systems with stubborn sets does not suffer from the inconsistent labelling problem.
The same holds for condition \textbf{\"A8}, as already shown in~\cite{Valmari1997}.

\section{Petri Nets}
\label{sec:petri_nets}
Petri nets are a widely-known formalism for modelling concurrent processes and have seen frequent use in the application of stubborn-set theory~\cite{Bonneland2019,Liebke2019,Valmari2017a,Varpaaniemi2005}.
A Petri net contains a set of \emph{places} $P$ and a set of \emph{structural transitions} $T$.
\emph{Arcs} between places and structural transitions are weighted according to a total function $W: (P \times T) \cup (T \times P) \to \mathbb{N}$.
The state space of the underlying LSTS is the set $\Markings$ of all \emph{markings}; a marking $m$ is a function $P \to \mathbb{N}$, which assigns a number of \emph{tokens} to each place.
The LSTS contains a transition $m \transition{\tr} m'$ iff $m(p) \geq W(p,\tr)$ and $m'(p) = m(p) - W(p,\tr) + W(\tr,p)$ for all places $p \in P$.
As before, we assume the LSTS contains some labelling function $L: \Markings \to 2^\AP$.
More details on the labels are given below.
Note that markings and structural transitions take over the role of states and actions respectively.
The set of markings reachable under $\edgerel$ from some \emph{initial marking} $\init{m}$ is denoted $\Mreach$.

\begin{example}
	\label{ex:petri_net}
	Consider the Petri net with initial marking $\init{m}$ below on the left.
	Here, all arcs are weighted 1, except for the arc from $p_5$ to $\tr_2$, which is weighted 2.
	Its LSTS is infinite, but the reachable substructure is depicted on the right.
	The number of tokens in each of the places $p_1,\dots,p_6$ is inscribed in the nodes, the state labels (if any) are written beside the nodes.
	\begin{center}
		\begin{tikzpicture}[->,>=stealth',shorten >=0pt,auto,node distance=2.0cm,semithick,
		every place/.style={draw,minimum size=5mm}]
		
		\begin{scope}[scale=0.8]
		\tikzstyle{vtransition} = [fill,inner sep=0pt,minimum width=1.6mm,minimum height=5.5mm]
		\tikzstyle{htransition} = [fill,inner sep=0pt,minimum width=5.5mm,minimum height=1.6mm]
		
		\def\x{1.5}
		\def\y{2}
		\node[place,label={above:$p_1$},tokens=1]       (p1) at (0,\y) {};
		\node[place,label={above:$p_2$}]                (p2) at (2*\x,1.25*\y) {};
		\node[place,label={below right:$p_3$},tokens=1] (p3) at (2*\x,0.75*\y) {};
		\node[place,label={above:$p_4$},tokens=1]       (p4) at (\x,0.1*\y) {};
		\node[place,label={right:$p_5$}]                (p5) at (3*\x,0.1*\y) {};
		\node[place,label={right:$p_6$}]                (p6) at (0,-0.7*\y) {};
		
		\node[vtransition,label={above:$\tr_1$}]  (t1) at (\x,\y) {};
		\node[vtransition,label={above:$\tr_2$}]  (t2) at (3*\x,\y) {};
		\node[htransition,label={below:$\tr$}]    (t)  at (2*\x,0.1*\y) {};
		\node[htransition,label={right:$\tr_3$}]  (t3) at (3*\x,-0.7*\y) {};
		\node[vtransition,label={above:$\keytr$}] (tk) at (0,0.1*\y) {};
		
		\path
		(p1) edge (t1) (p3) edge (t1) (t1) edge (p2)
		(p2) edge (t2) (t2) edge (p3)
		(p3) edge[bend left=12] (t.50)  (p4) edge (t)  (t.130)  edge[bend left=12] (p3)
		(t) edge (p5) (t2) edge (p5) (p5) edge node {$2$} (t3)
		(p4) edge (tk) (tk) edge (p6);
		\end{scope}
		
		\begin{scope}[xshift=5.9cm,yshift=-1.2cm]
		\tikzstyle{state}=[draw,inner sep=3pt,rectangle,rounded corners=3pt,node font=\scriptsize]
		\def\x{1.8}
		\def\y{1.5}
		
		\node[state,label={above left:$\init{m}$}] (1)  at (0,\y)      {101100};
		\node[state,label={below:$\{q_p\}$}]       (2)  at (\x,\y)     {010100};
		\node[state]                               (3)  at (2*\x,\y)   {001110};
		\node[state]                               (4)  at (0,0)       {101010};
		\node[state,label={below:$\{q_l\}$}]       (5)  at (\x,0)      {010010};
		\node[state]                               (6)  at (2*\x,0)    {001020};
		\node[state]                               (7)  at (0,2*\y)    {101001};
		\node[state,label={above:$\{q_p\}$}]       (8)  at (\x,2*\y)   {010001};
		\node[state]                               (9)  at (2*\x,2*\y) {001011};
		\node[state,label={below:$\{q\}$}]         (10) at (3*\x,0)    {001000};
		\path (-0.8,\y) edge (1)
		(1)  edge    node {$\tr$}   (4)
		(4)  edge    node {$\tr_1$} (5)
		(5)  edge    node {$\tr_2$} (6)
		(1)  edge    node {$\tr_1$} (2)
		(2)  edge    node {$\tr_2$} (3)
		(3)  edge    node {$\tr$}   (6)
		(1)  edge['] node {$\keytr$} (7)
		(7)  edge    node {$\tr_1$} (8)
		(8)  edge    node {$\tr_2$} (9)
		(3)  edge['] node {$\keytr$} (9)
		(2)  edge['] node {$\keytr$} (8)
		(6)  edge    node {$\tr_3$} (10)
		;
		\end{scope}
		\end{tikzpicture}
	\end{center}
	The LSTS practically coincides with the counter-example of Section~\ref{sec:counter_example}.
	Only the self-loops are missing and the state labelling, with atomic propositions $q$, $q_p$ and $q_l$, differs slightly; the latter will be explained later.
	For now, note that $\tr$ and $\keytr$ are invisible and that the trace $\emptyset \{q_p\} \emptyset \emptyset \{q\}$, which occurs when firing transitions $\tr_1 \tr_2 \tr \tr_3$ from $\init{m}$, can be lost when reducing with weak stubborn sets.
	\qed
\end{example}

In the remainder of this section, we fix a Petri net $(P, T, W, \init{m})$ and its LSTS $(\Markings, \edgerel, \init{m}, L)$.
Below, we consider three different types of atomic propositions.
Firstly, polynomial propositions~\cite{Bonneland2019} are of the shape $f(p_1,\dots,p_n) \bowtie k$ where $f$ is a polynomial over $p_1,\dots,p_n$, $\bowtie\, \in \{<,\leq,>,\geq,=,\neq\}$ and $k \in \mathbb{Z}$.
Such a proposition holds in a marking $m$ iff $f(m(p_1),\dots,m(p_n)) \bowtie k$.
A linear proposition~\cite{Liebke2019} is similar, but the function $f$ over places must be linear and $f(0,\dots,0) = 0$, \ie, linear propositions are of the shape $k_1 p_1 + \dots + k_n p_n \bowtie k$, where $k_1,\dots,k_n,k \in \mathbb{Z}$.
Finally, we have arbitrary propositions~\cite{Varpaaniemi2005}, whose shape is not restricted and which can hold in any given set of markings.

Several other types of atomic propositions can be encoded as polynomial propositions.
For example, $\mathit{fireable}(\tr)$~\cite{Bonneland2019,Liebke2019}, which holds in a marking $m$ iff $\tr$ is enabled in $m$, can be encoded as $\prod_{p\in P} \prod_{i = 0}^{W(p,t)-1} (p - i) \geq 1$.
The proposition $\mathit{deadlock}$, which holds in markings where no structural transition is enabled, does not require special treatment in the context of POR, since it is already preserved by \textbf{D1} and \textbf{D2w}.
The sets containing all linear and polynomial propositions are henceforward called $\AP_l$ and $\AP_p$, respectively.
The corresponding labelling functions are defined as $L_l(m) = L(m) \cap \AP_l$ and $L_p(m) = L(m) \cap \AP_p$ for all markings $m$.
Below, the two stutter equivalences $\stuteq_{L_l}$ and $\stuteq_{L_p}$ that follow from the new labelling functions are abbreviated $\stuteq_l$ and $\stuteq_p$, respectively.
Note that $\AP \supseteq \AP_p \supseteq \AP_l$ and $\stuteq \conslab \stuteq_p \conslab \stuteq_l$.

For the purpose of introducing several variants of invisibility, we reformulate and generalise the definition of invisibility from Section~\ref{sec:preliminaries}.
Given an atomic proposition $q \in \AP$, a relation $\strel \subseteq \Markings \times \Markings$ is \emph{$q$-invisible} if and only if $(m, m') \in \strel$ implies $q \in L(m) \Leftrightarrow q \in L(m')$.
We consider a structural transition $\tr$ $q$-invisible iff its corresponding relation $\{(m,m') \mid m \transition{\tr} m' \}$ is $q$-invisible.
Invisibility is also lifted to sets of atomic propositions: given a set $\AP' \subseteq \AP$, relation $\strel$ is \emph{$\AP'$-invisible} iff it is $q$-invisible for all $q \in \AP'$.
If $\strel$ is $\AP$-invisible, we plainly say that $\strel$ is \emph{invisible}.
$\AP'$-invisibility and invisibility carry over to structural transitions.
We sometimes refer to invisibility as \emph{ordinary invisibility} for emphasis.
Note that the set of invisible structural transitions $\Inv$ is no longer an under-approximation, but contains exactly those structural transitions $\tr$ for which $m \transition{\tr} m'$ implies $L(m) = L(m')$ (cf. Section~\ref{sec:preliminaries}).

We are now ready to introduce three orthogonal variations on invisibility.
Firstly, relation $\strel \subseteq \Markings \times \Markings$ is \emph{reach $q$-invisible}~\cite{Valmari2017a} iff $\strel \cap (\Mreach \times \Mreach)$ is $q$-invisible, \ie, all the pairs of reachable markings $(m,m') \in \strel$ agree on the labelling of $q$.
Secondly, $\strel$ is \emph{value $q$-invisible} if 
\begin{enumerate*}[label=\textnormal{(\roman*)}]
	\item $q$ is polynomial and for all $(m,m') \in \strel$, $f(m(p_1),\dots,m(p_n)) = f(m'(p_1),\dots,m'(p_n))$; or if
	\item $q$ is not polynomial and $\strel$ is $q$-invisible.
\end{enumerate*}
Intuitively, this means that the value of polynomial $f$ never changes between two markings $(m,m') \in \strel$.
Reach and value invisibility are lifted to structural transitions and sets of atomic propositions as before, \ie, by taking $\strel = \{ (m,m') \mid m \transition{\tr} m' \}$ when considering invisibility of $\tr$.
\setlength\intextsep{1em}
\begin{wrapfigure}{r}{3.5cm}
	\centering
	\begin{tikzpicture}[->,>=stealth',inner sep=2pt,x={(1cm,0.85cm)},y=0.8cm,z={(-1cm,0.85cm)}]
	\node (is)   at (0,1,1) {$\Inv_s$};
	\node (iv)   at (1,1,0) {$\Inv_v$};
	\node (i)    at (0,1,0) {$\Inv$};
	\node (irs)  at (0,0,1) {$\Inv^r_s$};
	\node (irv)  at (1,0,0) {$\Inv^r_v$};
	\node (ir)   at (0,0,0) {$\Inv^r$};
	\node (irsv) at (1,0,1) {$\Inv^r_{sv}$};
	\node (isv)  at (1,1,1) {$\Inv_{sv}$};
	\path
	(is)  edge (i)
	(iv)  edge (i)
	(irs) edge (ir)
	(irv) edge (ir)
	(is)  edge (irs)
	(iv)  edge (irv)
	(i)   edge (ir)
	(isv) edge (irsv)
	(isv) edge (is)
	(isv) edge (iv)
	(irsv) edge (irs)
	(irsv) edge (irv)
	;
	\end{tikzpicture}
	\caption{Lattice of sets of invisible actions. Arrows represent a subset relation.}
	\label{fig:invisibility_lattice}
\end{wrapfigure}
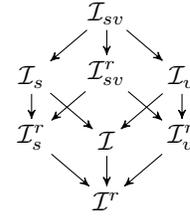
Finally, we introduce another way to lift invisibility to structural transitions: $\tr$ is \emph{strongly $q$-invisible} iff the set $\{ (m,m') \mid \forall p \in P: m'(p) = m(p) + W(t,p) - W(p,t) \}$ is $q$-invisible.
Strong invisibility does not take the presence of a transition $m \transition{\tr} m'$ into account, and purely reasons about the effects of $\tr$.
Value invisibility and strong invisibility are new in the current work, although strong invisibility was inspired by the notion of invisibility that is proposed by Varpaaniemi in~\cite{Varpaaniemi2005}.

We indicate the sets of all value, reach and strongly invisible structural transitions with $\Inv_v$, $\Inv^r$ and $\Inv_s$ respectively.
Since $\Inv_v \subseteq \Inv$, $\Inv_s \subseteq \Inv$ and $\Inv \subseteq \Inv^r$, the set of all their possible combinations forms the lattice shown in Figure~\ref{fig:invisibility_lattice}.
In the remainder, the weak equivalence relations that follow from each of the eight invisibility notions are abbreviated, \eg, $\weakeq_{\Inv^r_{sv}}$ becomes $\weakeq^r_{sv}$.

\begin{example}
	\label{ex:petri_net_AP}
	Consider again the Petri net and LSTS from Example~\ref{ex:petri_net}.
	We can define $q_l$ and $q_p$ as linear and polynomial propositions, respectively:
	\begin{itemize}
		\item $q_l := p_3 + p_4 + p_6 = 0$ is a linear proposition, which holds when neither $p_3$, $p_4$ nor $p_6$ contains a token.
		Structural transition $\tr$ is $q_l$-invisible, because $m \transition{\tr} m'$ implies that $m(p_3) = m'(p_3) \geq 1$, and thus neither $m$ nor $m$ is labelled with $q_l$.
		On the other hand, $\tr$ is not value $q_l$-invisible (by the transition $101100 \transition{\tr} 101010$) or strongly reach $q_l$-invisible (by $010100$ and $010010$).
		However, $\keytr$ is strongly value $q_l$-invisible: it moves a token from $p_4$ to $p_6$ and hence never changes the value of $p_3 + p_4 + p_6$.
		\item $q_p := (1 - p_3)(1 - p_5) = 1$ is a polynomial proposition, which holds in all reachable markings $m$ where $m(p_3) = 0$ and $m(p_5) = 0$.
		Structural transition $\tr$ is reach value $q_p$-invisible, but not $q_p$-invisible (by $002120 \transition{\tr} 002030$) or strongly reach $q_p$ invisible.
		Strong value $q_p$-invisibility of $\keytr$ follows immediately from the fact that the adjacent places of $\keytr$, \viz $p_4$ and $p_6$, do not occur in the definition of $q_p$.
	\end{itemize}
	This yields the state labelling which is shown in Example~\ref{ex:petri_net}.
	\qed
\end{example}

Given a weak equivalence relation $R_\weakeq$ and a stutter equivalence relation $R_\stuteqsym$, we write $R_\weakeq \conslab R_\stuteqsym$ to indicate that $R_\weakeq$ and $R_\stuteqsym$ yield consistent labelling.
We spend the rest of this section investigating under which notions of invisibility and propositions from the literature, the LSTS of a Petri net is labelled consistently.
More formally, we check for each weak equivalence relation $R_\weakeq$ and each stutter equivalence relation $R_\stuteqsym$ whether $R_\weakeq \conslab R_\stuteqsym$.
This tells us when existing stubborn set theory can be applied without problems.
The two lattices containing all weak and stuttering equivalence relations are depicted in Figure~\ref{fig:weak_stut_lattices}; each dotted arrow represents a consistent labelling result.
Before we continue, we first introduce an auxiliary lemma.
\begin{figure}
	\centering
	\begin{tikzpicture}[->,>=stealth',inner sep=2pt]
	\begin{scope}[name prefix={},x={(1cm,0.85cm)},y=0.8cm,z={(-1cm,0.85cm)}]
		\node (ws)   at (0,1,1) {$\weakeq_s$};
		\node (wv)   at (1,1,0) {$\weakeq_v$};
		\node (w)    at (0,1,0) {$\weakeq$};
		\node (wrs)  at (0,0,1) {$\weakeq^r_s$};
		\node (wrv)  at (1,0,0) {$\weakeq^r_v$};
		\node (wr)   at (0,0,0) {$\weakeq^r$};
		\node (wrsv) at (1,0,1) {$\weakeq^r_{sv}$};
		\node (wsv)  at (1,1,1) {$\weakeq_{sv}$};
		\path
		(ws)  edge (w)
		(wv)  edge (w)
		(wrs) edge (wr)
		(wrv) edge (wr)
		(ws)  edge (wrs)
		(wv)  edge (wrv)
		(w)   edge (wr)
		(wsv) edge (wrsv)
		(wsv) edge (ws)
		(wsv) edge (wv)
		(wrsv) edge (wrs)
		(wrsv) edge (wrv)
		;
	\end{scope}
	\begin{scope}[xshift=5cm,yshift=0.2cm,name prefix={}]
		\def\d{1.1}
		\node (s)  at (0,2*\d) {$\stuteq$};
		\node (sp) at (0,\d) {$\stuteq_p$};
		\node (sl) at (0,0) {$\stuteq_l$};
		\path
		(s) edge (sp)
		(sp) edge (sl)
		;
	\end{scope}
	\path[dotted]
	(wrs)  edge[bend left=17] node[near end,fill=white] {Theorem~\ref{thm:petri_net_labelled_consistently_arbitrary_ap}} (s)
	(wv) edge node[fill=white] {Theorem~\ref{thm:petri_net_labelled_consistently_polynomial_ap}} (sp)
	(wrv) edge node[fill=white] {Theorem~\ref{thm:petri_net_labelled_consistently_linear_ap}} (sl)
	;
	\end{tikzpicture}
	\caption{Two lattices containing variations of weak equivalence and stutter equivalence, respectively.
	Solid arrows indicate a subset relation inside the lattice; dotted arrows follow from the indicated theorems and show when the LSTS of a Petri net is labelled consistently.
	}
	\label{fig:weak_stut_lattices}
\end{figure}
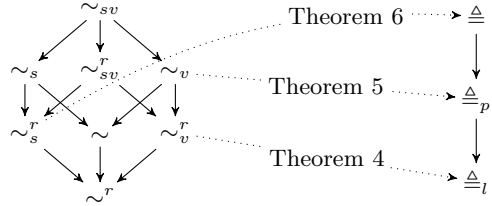

\begin{lemma}
	\label{lmm:meta_proof_pn_consistent_labelling}
	Let $I$ be a set of invisible structural transitions and $L$ some labelling function.
	If for all $\tr \in I$ and paths $\pi = m_0 \transition{\tr_1} m_1 \transition{\tr_2} \dots$ and $\pi' = m_0 \transition{\tr} m'_0 \transition{\tr_1} m'_1 \transition{\tr_2} \dots$, it holds that $\pi \stuteq_L \pi'$, then $\weakeq_I \conslab \stuteq_L$.
\end{lemma}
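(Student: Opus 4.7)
The plan is to prove $\weakeq_I \conslab \stuteq_L$ by induction, treating the hypothesis as the basic atomic transformation step.

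First, a preliminary observation: applying the hypothesis with the length-zero path $\pi = m_0$ paired with $\pi' = m_0 \transition{\tr} m'$ (for any $\tr \in I$ enabled at $m_0$) immediately forces $L(m_0) = L(m')$. Therefore every invisible transition preserves state labels, so every maximal block of consecutive invisible transitions in a path has constant labelling throughout. Consequently, after decomposing a path as $u_0 v_1 u_1 v_2 u_2 \ldots$ into alternating invisible blocks $u_i$ and visible transitions $v_i$, its no-stutter trace is fully determined by the labels of the ``corner'' markings $m_0, m_1, m_2, \ldots$ reached just after each $v_i$.

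Now suppose we are given two paths $\pi, \pi^*$ from a common state $m$ with $\pi \weakeq_I \pi^*$. They share some visible projection $v_1 v_2 \ldots$, and the goal $\pi \stuteq_L \pi^*$ reduces to proving $L(m_i) = L(m^*_i)$ at every corner position $i$. I would prove this by induction on $i$ applied to the finite prefixes $\pi|_i$ and $\pi^*|_i$ ending immediately after the $i$-th visible transition; each is a finite path from $m$ with common visible projection of length $i$. The base case is immediate. For the inductive step, I would bridge $\pi|_i$ and $\pi^*|_i$ by a finite chain of intermediate paths $\pi|_i = \rho_0, \rho_1, \ldots, \rho_N = \pi^*|_i$, each obtained from its predecessor by one application of the hypothesis (prepending a single invisible transition at the start of some matching sub-path), invoking Petri-net determinism to identify markings along shared transition segments. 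Transitivity of $\stuteq_L$ yields $\pi|_i \stuteq_L \pi^*|_i$, which forces the corner labels to agree. The infinite-path case reduces to the finite argument applied to every finite prefix, since $\weakeq_I$ forces both paths to be infinite in that setting.

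The principal obstacle I anticipate is constructing the intermediate-path chain: the hypothesis only allows prepending a single invisible at the very start of a path, so when $\pi|_i$ and $\pi^*|_i$ interleave their invisibles in different positions, bridging them requires routing through auxiliary paths that redistribute invisibles across the shared visible transitions. Each such auxiliary must itself be realisable as a fireable Petri-net run; establishing the existence of these auxiliaries from only the abstract hypothesis (without assuming stronger independence properties of invisibles and visibles) is the most delicate step of the proof.
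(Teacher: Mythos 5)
Your overall strategy is the same as the paper's: reduce $\weakeq_I \conslab \stuteq_L$ to a chain of elementary steps, each of which inserts a single transition of $I$ into a path and is discharged by the hypothesis. Your preliminary observation (instantiating the hypothesis with a trivial $\pi$ forces $L(m_0)=L(m')$ for every $\tr\in I$) is sound. The paper organises the chain differently from your corner-marking induction: it inducts on the \emph{total} number of $I$-transitions occurring in the pair $(\pi,\pi')$, with base case ``both paths are $I$-free'', where $\pi \weakeq_I \pi'$ forces $\pi = \pi'$ by determinism of Petri nets, and with an inductive step that passes from $\pi' = \sigma_1\sigma_2$ to $\pi'' = \sigma_1\tr\sigma_2'$ and applies the hypothesis \emph{to the suffix beginning at the insertion point}, concluding $\sigma_2 \stuteq_L \tr\sigma_2'$ and hence $\pi' \stuteq_L \pi''$. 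This already answers part of your stated worry: since the hypothesis quantifies over all paths, it may be applied to the sub-path starting wherever the invisible transition is inserted; you are not restricted to prepending at the front of the whole path, and no ``redistribution across visible transitions'' is needed beyond one insertion per step.

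The genuine problem is the one you flag and then leave open: the \emph{existence} of the intermediate paths. To connect two arbitrary weakly equivalent paths by single insertions, one must in effect be able to delete an occurrence of some $\tr\in I$ from a firing sequence and still have a firing sequence, and in a general Petri net this can fail, because $\tr$ may produce exactly the tokens that enable later transitions (in the net of Example~\ref{ex:petri_net}, deleting $\tr$ from $\tr_1\tr_2\tr\tr_3$ leaves $\tr_3$ disabled). Your proposal defers precisely this step, so as written it is not a complete proof. Be aware that the paper's own inductive step makes the same silent assumption---it presents the construction in the insertion direction and does not argue that every weakly equivalent pair with $k+1$ invisible transitions arises by a valid insertion from a pair with $k$ of them---so your instinct about where the difficulty sits is accurate; but closing it requires more than the abstract hypothesis, e.g.\ the fact, available in all of Theorems~\ref{thm:petri_net_labelled_consistently_linear_ap}, \ref{thm:petri_net_labelled_consistently_polynomial_ap} and~\ref{thm:petri_net_labelled_consistently_arbitrary_ap}, that $L(m)=L(m+d_\tr)$ holds for the relevant markings $m$ whether or not $\tr$ is enabled there, which lets one compare corner markings directly without constructing intermediate firing sequences.
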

\begin{proof}
	We assume that the following holds for all paths and $\tr \in I$:
	\begin{equation}
		\tag{$\dagger$}\label{eq:pn_introduce_invis_front}
		m_0 \transition{\tr_1} m_1 \transition{\tr_2} \dots \stuteq_L m_0 \transition{\tr} m'_0 \transition{\tr_1} m'_1 \transition{\tr_2} \dots
	\end{equation}
	We consider two paths $\pi$ and $\pi'$ such that $\pi \weakeq_I \pi'$ and prove that $\pi \stuteq_L \pi'$.
	The proof proceeds by induction on the combined number of invisible structural transitions (taken from $I$) in $\pi$ and $\pi'$.
	In the base case, $\pi$ and $\pi'$ contain only visible structural transitions, and $\pi \weakeq_I \pi$' implies $\pi = \pi'$ since Petri nets are deterministic.
	Hence, $\pi \stuteq_L \pi'$.
	
	For the induction step, we take as hypothesis that, for all paths $\pi$ and $\pi'$ that together contain at most $k$ invisible structural transitions, $\pi \weakeq_I \pi'$ implies $\pi \stuteq_L \pi'$.
	Let $\pi$ and $\pi'$ be two arbitrary paths such that $\pi \weakeq_I \pi'$ and the total number of invisible structural transitions contained in $\pi$ and $\pi'$ is $k$.
	We consider the case where an invisible structural transition is introduced in $\pi'$, the other case is symmetric.
	Let $\pi' = \sigma_1 \sigma_2$ for some $\sigma_1$ and $\sigma_2$.
	Let $\tr \in I$ be some invisible structural transition and $\pi'' = \sigma_1 \tr \sigma'_2$ such that $\sigma_2$ and $\sigma'_2$ contain the same sequence of structural transitions.
	Clearly, we have $\pi' \weakeq_I \pi''$.
	Here, we can apply our original assumption (\ref{eq:pn_introduce_invis_front}), to conclude that $\sigma_2 \stuteq t\sigma'_2$, \ie, the extra stuttering step $\tr$ thus does not affect the labelling of the remainder of $\pi''$.
	Hence, we have $\pi' \stuteq_L \pi''$ and, with the induction hypothesis, $\pi \stuteq_L \pi''$.
	Note that $\pi$ and $\pi''$ together contain $k+1$ invisible structural transitions.
	
	In case $\pi$ and $\pi'$ together contain an infinite number of invisible structural transitions, $\pi \weakeq_I \pi'$ implies $\pi \stuteq_L \pi'$ follows from the fact that the same holds for all finite prefixes of $\pi$ and $\pi'$ that are related by $\weakeq_I$.
	\qed
\end{proof}

The following theorems each focus on a class of atomic propositions and show which notion of invisibility is required for the LSTS of a Petri net to be labelled consistently.
In the proofs, we use a function $d_\tr$, defined as $d_\tr(p) = W(\tr,p) - W(p,\tr)$ for all places $p$, which indicates how structural transition $\tr$ changes the state.
Furthermore, we also consider functions of type $P \to \mathbb{N}$ as vectors of type $\mathbb{N}^{\cardinality{P}}$.
This allows us to compute the pairwise addition of a marking $m$ with $d_\tr$ ($m + d_\tr$) and to indicate that $\tr$ does not change the marking ($d_\tr = 0$).

\begin{theorem}
	\label{thm:petri_net_labelled_consistently_linear_ap}
	Under reach value invisibility, the LSTS underlying a Petri net is labelled consistently for linear propositions, \ie, $\weakeq^r_v \conslab \stuteq_l$.
\end{theorem}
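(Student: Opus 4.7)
The plan is to invoke Lemma~\ref{lmm:meta_proof_pn_consistent_labelling} with $I = \Inv^r_v$ and $L = L_l$. This reduces the theorem to showing the following: for every reach value invisible structural transition $\tr$ and every pair of paths $\pi = m_0 \transition{\tr_1} m_1 \transition{\tr_2} \dots$ and $\pi' = m_0 \transition{\tr} m'_0 \transition{\tr_1} m'_1 \transition{\tr_2} \dots$ starting from a reachable marking $m_0$, we have $\pi \stuteq_l \pi'$. Because $d_\tr$ is a constant vector depending only on $\tr$, a simple induction using commutativity of vector addition yields $m'_i = m_i + d_\tr$ for every $i \geq 0$, irrespective of the intervening firings $\tr_1,\tr_2,\dots$.

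Next I would fix an arbitrary linear proposition $q_l \in \AP_l$ with defining polynomial $f(x_1,\dots,x_n) = k_1 x_1 + \dots + k_n x_n$. Since $m_0$ (hence $m'_0$) is reachable and $m_0 \transition{\tr} m'_0$, reach value $q_l$-invisibility of $\tr$ gives $f(m_0) = f(m'_0)$. By linearity, $f(m_0 + d_\tr) = f(m_0) + f(d_\tr)$, so $f(d_\tr) = 0$. Applying linearity once more, $f(m'_i) = f(m_i + d_\tr) = f(m_i) + f(d_\tr) = f(m_i)$ for every $i$. Since $q_l$ was arbitrary, $L_l(m_i) = L_l(m'_i)$ for all $i$. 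The $L_l$-traces of $\pi$ and $\pi'$ are therefore $L_l(m_0) L_l(m_1) L_l(m_2) \dots$ and $L_l(m_0) L_l(m_0) L_l(m_1) L_l(m_2) \dots$, which differ only by one repeated label at the front; hence $\pi \stuteq_l \pi'$ and the hypothesis of Lemma~\ref{lmm:meta_proof_pn_consistent_labelling} is satisfied.

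The main obstacle is conceptual rather than technical: one has to notice that reach value invisibility yields the \emph{global} constraint $f(d_\tr) = 0$, even though its defining hypothesis only speaks about reachable firings of $\tr$. The linearity (with zero constant term) of a linear proposition is essential here, since it is what allows the single witness $m_0 \transition{\tr} m'_0$ to propagate equality of $f$-values to all intermediate markings $m_i, m'_i$, which themselves need not be connected by a firing of $\tr$ at all. Without linearity this step would collapse, foreshadowing why polynomial propositions will require a stronger, non-reach value-invisibility hypothesis in Theorem~\ref{thm:petri_net_labelled_consistently_polynomial_ap}.
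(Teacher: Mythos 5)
Your proof is correct and follows essentially the same route as the paper's: reduce to Lemma~\ref{lmm:meta_proof_pn_consistent_labelling}, use a single reachable firing of $\tr$ together with linearity and $f(0,\dots,0)=0$ to conclude $f(d_\tr)=0$, and propagate this to $f(m_i)=f(m'_i)$ for all $i$ via $m'_i = m_i + d_\tr$. The only (cosmetic) difference is that you restrict attention to paths starting at a reachable $m_0$, whereas the paper extracts $f(d_\tr)=0$ from an arbitrary reachable witness firing of $\tr$ (dispatching separately the degenerate case where no such firing exists), so that the marking-independent fact $f(d_\tr)=0$ also covers the unrestricted quantification over paths required by Lemma~\ref{lmm:meta_proof_pn_consistent_labelling}.
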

\begin{proof}
	Let $\tr \in \Inv^r_v$ be a reach value invisible structural transition such that there exist reachable markings $m$ and $m'$ with $m \transition{\tr} m'$.
	If such a $\tr$ does not exist, then $\weakeq^r_v$ is the reflexive relation and $\weakeq^r_v \conslab \stuteq_l$ is trivially satisfied.
	Otherwise, let $q := f(p_1,\dots,p_n) \bowtie k$ be a linear proposition.
	Since $\tr$ is reach value invisible and $f$ is linear, we have $f(m) = f(m') = f(m + d_\tr) = f(m) + f(d_\tr)$ and thus $f(d_\tr) = 0$.
	It follows that, given two paths $\pi = m_0 \transition{\tr_1} m_1 \transition{\tr_2} \dots$ and $\pi' = m_0 \transition{\tr} m'_0 \transition{\tr_1} m'_1 \transition{\tr_2} \dots$, the addition of $\tr$ does not influence $f$, since $f(m_i) = f(m_i) + f(d_\tr) = f(m_i + d_\tr) = f(m'_i)$ for all $i$.
	As a consequence, $\tr$ also does not influence $q$.
	With Lemma~\ref{lmm:meta_proof_pn_consistent_labelling}, we deduce that $\weakeq^r_v \conslab \stuteq_l$.
	\qed
\end{proof}

Whereas in the linear case one can easily conclude that $\pi$ and $\pi'$ are stutter equivalent under $f$, in the polynomial case, we need to show that $f$ is constant under all value invisible structural transitions $\tr$, even in markings where $\tr$ is not enabled.
This follows from the following proposition.

\begin{restatable}{proposition}{valueinvisfconstant}
	\label{prop:value_invis_f_constant}
	Let $f: \mathbb{N}^n \to \mathbb{Z}$ be a polynomial function, $a,b \in \mathbb{N}^n$ two constant vectors and $c = a - b$ the difference between $a$ and $b$.
	Assume that for all $x \in \mathbb{N}^n$ such that $x \geq b$, where $\geq$ denotes pointwise comparison, it holds that $f(x) = f(x + c)$.
	Then, $f$ is constant in the vector $c$, \ie, $f(x) = f(x+c)$ for all $x \in \mathbb{N}^n$.
\end{restatable}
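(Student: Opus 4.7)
The plan is to reformulate the claim in terms of vanishing of a single auxiliary polynomial and then invoke the standard fact that a multivariate polynomial which is zero on a product of infinite sets must be identically zero.

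First I would define the polynomial $g: \mathbb{Z}^n \to \mathbb{Z}$ by $g(x) = f(x+c) - f(x)$. The hypothesis then says exactly that $g$ vanishes on the set
\[ S = \{x \in \mathbb{N}^n \mid x \geq b\} = (b_1 + \mathbb{N}) \times (b_2 + \mathbb{N}) \times \cdots \times (b_n + \mathbb{N}), \]
which is a Cartesian product of $n$ infinite subsets of $\mathbb{Z}$. Note that I am free to evaluate $g$ on all of $\mathbb{Z}^n$ since polynomials extend to the whole lattice, and proving $g \equiv 0$ on $\mathbb{Z}^n$ trivially implies $f(x) = f(x+c)$ for every $x \in \mathbb{N}^n$.

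The main step is then to prove, by induction on the number of variables $n$, the lemma that a polynomial vanishing on a product of $n$ infinite subsets of $\mathbb{Z}$ is the zero polynomial. For the base case $n=1$, a univariate polynomial with infinitely many roots must be identically zero. For the induction step, I would write
\[ g(x_1,\ldots,x_n) = \sum_{k=0}^{d} g_k(x_2,\ldots,x_n)\, x_1^k, \]
fix any tuple $(x_2,\ldots,x_n) \in (b_2+\mathbb{N}) \times \cdots \times (b_n+\mathbb{N})$, and observe that the resulting univariate polynomial in $x_1$ has infinitely many roots (namely all of $b_1+\mathbb{N}$), so every coefficient $g_k(x_2,\ldots,x_n)$ vanishes on $(b_2+\mathbb{N}) \times \cdots \times (b_n+\mathbb{N})$. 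The induction hypothesis then forces each $g_k$ to be the zero polynomial, so $g$ itself is identically zero.

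I do not expect any serious obstacle here, since the argument is a textbook density argument; the only care needed is to make sure the reformulation via $g$ is valid for arbitrary $x \in \mathbb{N}^n$ (including those $x$ with $x \not\geq b$, where the original transition $\tr$ need not be enabled) — but since $g$ is purely a polynomial identity, enabledness plays no role once the hypothesis has been translated into the vanishing of $g$ on $S$. Concluding, $g \equiv 0$ yields $f(x) = f(x+c)$ for every $x \in \mathbb{N}^n$, as required.
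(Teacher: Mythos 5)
Your proof is correct, but it takes a genuinely different route from the paper's. You treat $g(x) = f(x+c) - f(x)$ as a multivariate polynomial, observe that it vanishes on the Cartesian product $(b_1+\mathbb{N}) \times \cdots \times (b_n+\mathbb{N})$ of infinite sets, and invoke (and prove, by induction on the number of variables) the multivariate identity theorem to conclude $g \equiv 0$ on all of $\mathbb{Z}^n$. The paper instead avoids any multivariate machinery: for each \emph{fixed} $x \in \mathbb{N}^n$ it considers the single-variable polynomial $g_x(t) = f(x + t\cdot\mathbf{1} + c) - f(x + t\cdot\mathbf{1})$ along the diagonal line through $x$, notes that $x + t\cdot\mathbf{1} \geq b$ for all sufficiently large $t$ so that $g_x$ has infinitely many roots, concludes $g_x$ is the zero polynomial, and reads off $g_x(0) = f(x+c) - f(x) = 0$. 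Both arguments rest on the same univariate fact (a polynomial with infinitely many roots is zero); yours additionally establishes the stronger global statement that $g$ is the zero polynomial, at the cost of the induction over variables, while the paper's one-line restriction to a ray gives exactly the pointwise conclusion needed with less machinery. Your closing remark about enabledness playing no role once the hypothesis is translated into vanishing of $g$ on $S$ is apt and matches the intended use of the proposition.
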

\begin{proof}
	Let $f$, $a$, $b$ and $c$ be as above and let $\mathbf{1} \in \mathbb{N}^n$ be the vector containing only ones.
	Given some arbitrary $x \in \mathbb{N}^n$, consider the function $g_x(t) = f(x + t\cdot\mathbf{1} + c) - f(x + t\cdot\mathbf{1})$.
	For sufficiently large $t$, it holds that $x + t \cdot \mathbf{1} \geq b$, and it follows that $g_x(t) = 0$ for all sufficiently large $t$.
	This can only be the case if $g_x$ is the zero polynomial, \ie, $g_x(t) = 0$ for all $t$.
	As a special case, we conclude that $g_x(0) = f(x + c) - f(x) = 0$.
	\qed
\end{proof}

The intuition behind this is that $f(x+c) - f(x)$ behaves like the directional derivative of $f$ with respect to $c$.
If the derivative is equal to zero in infinitely many $x$, $f$ must be constant in the direction of $c$.
We will apply this result in the following theorem.

\begin{theorem}
	\label{thm:petri_net_labelled_consistently_polynomial_ap}
	Under value invisibility, the LSTS underlying a Petri net is labelled consistently for polynomial propositions, \ie, $\weakeq_v \conslab \stuteq_p$.
\end{theorem}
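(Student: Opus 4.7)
The plan is to invoke Lemma~\ref{lmm:meta_proof_pn_consistent_labelling} with $I = \Inv_v$ and $L = L_p$. This reduces the task to showing that inserting a single value invisible structural transition $\tr$ at the front of any path preserves $\stuteq_p$. More precisely, given $\pi = m_0 \transition{\tr_1} m_1 \transition{\tr_2} \dots$ and an extended path $\pi' = m_0 \transition{\tr} m'_0 \transition{\tr_1} m'_1 \transition{\tr_2} \dots$ with $\tr \in \Inv_v$, I must prove $\pi \stuteq_p \pi'$.

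The first observation is structural: the effect vectors add, so a straightforward induction on $i$ shows $m'_i = m_i + d_\tr$ for every $i$, using $m'_0 = m_0 + d_\tr$ and $m'_{i+1} = m'_i + d_{\tr_{i+1}} = m_i + d_\tr + d_{\tr_{i+1}} = m_{i+1} + d_\tr$. Hence the entire question reduces to comparing $f(m_i)$ and $f(m_i + d_\tr)$ for every polynomial proposition $q := f(p_1,\dots,p_n) \bowtie k$ in $\AP_p$.

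The main obstacle, and the reason Proposition~\ref{prop:value_invis_f_constant} is needed, is that value invisibility of $\tr$ only directly provides $f(m) = f(m + d_\tr)$ for markings $m$ in which $\tr$ is \emph{enabled}, i.e.\ those satisfying $m(p) \geq W(p,\tr)$ for every place $p$. The markings $m_i$ appearing along the path $\pi$ need not satisfy this bound, so the pointwise invariance of $f$ cannot be applied to them out of the box. I would set $b = W(\cdot, \tr)$ and $c = d_\tr$ in Proposition~\ref{prop:value_invis_f_constant}: the hypothesis $f(x) = f(x+c)$ for all $x \geq b$ is precisely the instance of value invisibility on enabled markings, and the conclusion extends the equality to all of $\mathbb{N}^{|P|}$. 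Applied to each $m_i$, this yields $f(m_i) = f(m_i + d_\tr) = f(m'_i)$, and likewise $f(m_0) = f(m'_0)$.

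Since this holds for every polynomial $f$ occurring in any $q \in \AP_p$, we obtain $L_p(m_i) = L_p(m'_i)$ for all $i$ and $L_p(m_0) = L_p(m'_0)$. The sequence of labels along $\pi'$ is therefore $L_p(m_0), L_p(m_0), L_p(m_1), L_p(m_2), \dots$, which differs from the sequence along $\pi$ only by the initial duplicated label and is thus stutter equivalent to it under $L_p$. Lemma~\ref{lmm:meta_proof_pn_consistent_labelling} then delivers $\weakeq_v \conslab \stuteq_p$, completing the argument.
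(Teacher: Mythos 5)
Your proof is correct and follows essentially the same route as the paper: reduce to the single-insertion condition of Lemma~\ref{lmm:meta_proof_pn_consistent_labelling}, observe $m'_i = m_i + d_\tr$, and use Proposition~\ref{prop:value_invis_f_constant} (with $b = W(\cdot,\tr)$, $c = d_\tr$) to extend $f(m) = f(m+d_\tr)$ from markings enabling $\tr$ to all markings. If anything, you are more explicit than the paper about \emph{why} the proposition is needed, namely that the $m_i$ along the path need not enable $\tr$.
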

\begin{proof}
	Let $\tr \in \Inv_v$ be a value invisible structural transition, $m$ and $m'$ two markings with $m \transition{\tr} m'$, and $q := f(p_1,\dots,p_n) \bowtie k$ a polynomial proposition.
	Note that infinitely many such (not necessarily reachable) markings exist in $\Markings$, so we can apply Proposition~\ref{prop:value_invis_f_constant} to obtain $f(m) = f(m + d_\tr)$ for all markings $m$.
	It follows that, given two paths $\pi = m_0 \transition{\tr_1} m_1 \transition{\tr_2} \dots$ and $\pi' = m_0 \transition{\tr} m'_0 \transition{\tr_1} m'_1 \transition{\tr_2} \dots$, the addition of $\tr$ does not alter the value of $f$, since $f(m_i) = f(m_i + d_\tr) = f(m'_i)$ for all $i$.
	As a consequence, $\tr$ also does not change the labelling of $q$.
	Application of Lemma~\ref{lmm:meta_proof_pn_consistent_labelling} yields $\weakeq_v \conslab \stuteq_p$.
	\qed
\end{proof}

Varpaaniemi shows that the LSTS of a Petri net is labelled consistently for arbitrary propositions under his notion of invisibility~\cite[Lemma 9]{Varpaaniemi2005}.
Our notion of strong visibility, and especially strong reach invisibility, is weaker than Varpaaniemi's invisibility, so we generalise the result to $\weakeq^r_s \conslab \stuteq$.
\begin{theorem}
	\label{thm:petri_net_labelled_consistently_arbitrary_ap}
	Under strong reach visibility, the LSTS underlying a Petri net is labelled consistently for arbitrary propositions, \ie, $\weakeq^r_s \conslab \stuteq$.
\end{theorem}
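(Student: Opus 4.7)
The plan is to invoke Lemma~\ref{lmm:meta_proof_pn_consistent_labelling} with $I := \Inv^r_s$ and $L$ the full state labelling, so it suffices to show that for every strongly reach invisible structural transition $\tr$ and every pair of paths $\pi = m_0 \transition{\tr_1} m_1 \transition{\tr_2} \dots$ and $\pi' = m_0 \transition{\tr} m'_0 \transition{\tr_1} m'_1 \transition{\tr_2} \dots$ starting from a reachable marking $m_0 \in \Mreach$ (the only case relevant for POR), we have $\pi \stuteq \pi'$.

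First I would exploit the algebraic structure of Petri-net firings: because each $\tr_j$ contributes the fixed displacement $d_{\tr_j}(p) = W(\tr_j,p) - W(p,\tr_j)$, a short induction on $i$ gives $m_i = m_0 + \sum_{j=1}^{i} d_{\tr_j}$ and $m'_i = m_0 + d_\tr + \sum_{j=1}^{i} d_{\tr_j} = m_i + d_\tr$. Since $m_0 \in \Mreach$ and both sequences of firings are legal, every $m_i$ and every $m'_i$ also lies in $\Mreach$.

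Next I would apply the definition of $\Inv^r_s$ directly: the set $\{(m, m') \mid m' = m + d_\tr\} \cap (\Mreach \times \Mreach)$ is $q$-invisible for every $q \in \AP$, so $L(m) = L(m')$ on that set. Instantiating with $(m, m') := (m_i, m'_i)$ yields $L(m_i) = L(m'_i)$ for every $i \geq 0$, and in particular $L(m_0) = L(m'_0)$. The trace of $\pi'$ is therefore $L(m_0) L(m'_0) L(m'_1) L(m'_2) \dots = L(m_0) L(m_0) L(m_1) L(m_2) \dots$, while that of $\pi$ is $L(m_0) L(m_1) L(m_2) \dots$; the two differ only by one extra repetition of $L(m_0)$ at the front, yielding the same no-stutter trace in both the finite and the infinite case, so $\pi \stuteq \pi'$.

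I do not anticipate a deep obstacle; the argument essentially unfolds the three definitions involved (strong invisibility, reach invisibility, and Petri-net firing). The only point that needs care is keeping every intermediate marking inside $\Mreach$, which is precisely what lets strong reach invisibility suffice in place of the strictly stronger notion of strong invisibility used by Varpaaniemi; this reachability follows immediately from $m_0 \in \Mreach$ and the closure of $\Mreach$ under firing. Unlike in Theorems~\ref{thm:petri_net_labelled_consistently_linear_ap} and~\ref{thm:petri_net_labelled_consistently_polynomial_ap}, we cannot extend the label-preserving property to \emph{all} markings via linearity or Proposition~\ref{prop:value_invis_f_constant}, since the propositions here are arbitrary; but we do not need to, because the meta-lemma is only applied along paths whose markings are all reachable.
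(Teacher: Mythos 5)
Your proof is correct and follows essentially the same route as the paper's: both reduce to the single-insertion case via Lemma~\ref{lmm:meta_proof_pn_consistent_labelling}, observe that $m'_i = m_i + d_\tr$ for all $i$, and invoke strong reach invisibility of $\tr$ to obtain $L(m_i) = L(m'_i)$. If anything, you are more explicit than the paper about the one point that lets \emph{reach} invisibility suffice here, namely that every $m_i$ and $m'_i$ stays in $\Mreach$ once $m_0$ does.
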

\begin{proof}
	Let $\tr \in \Inv^r_s$ be a strongly reach invisible structural transition and $\pi = m_0 \transition{\tr_1} m_1 \transition{\tr_2} \dots$ and $\pi' = m_0 \transition{\tr} m'_0 \transition{\tr_1} m'_1 \transition{\tr_2} \dots$ two paths.
	Since, $m'_i = m_i + d_\tr$ for all $i$, it holds that either
	\begin{enumerate*}[label=\textnormal{(\roman*)}]
		\item $d_\tr = 0$ and $m_i = m'_i$ for all $i$; or
		\item each pair $(m_i,m'_i)$ is contained in $\{ (m,m') \mid \forall p \in P: m'(p) = m(p) + W(t,p) - W(p,t) \}$, which is the set that underlies strong reach invisibility of $\tr$.
	\end{enumerate*}
	In both cases, $L(m_i) = L(m'_i)$ for all $i$.
	It follows from Lemma~\ref{lmm:meta_proof_pn_consistent_labelling} that $\weakeq^r_s \conslab \stuteq$.
	\qed
\end{proof}

To show that the results of the above theorems cannot be strengthened, we provide two negative results.
\begin{theorem}
	\label{thm:petri_net_not_labelled_consistently_weak_inv}
	Under ordinary invisibility, the LSTS underlying a Petri net is not necessarily labelled consistently for arbitrary propositions, \ie, $\weakeq \nconslab \stuteq$.
\end{theorem}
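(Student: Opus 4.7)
The plan is to exhibit an explicit counter-example by re-using the Petri net of Example~\ref{ex:petri_net}, but equipping it with a tailored labelling. Specifically, I introduce a single arbitrary proposition $Q$ whose defining set of markings is the singleton $\{p_2 + p_4\}$; this is a legitimate arbitrary proposition since its shape is unrestricted.

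The key verification is that $\tr$ becomes ordinary $Q$-invisible. This boils down to a single observation: $\tr$ is enabled only in markings with $m(p_3) \geq 1$, whereas the unique $Q$-marking has $m(p_3) = 0$. Consequently, neither the source nor the target of any $\tr$-transition can satisfy $Q$, so $L$ is preserved by every $\tr$-step and $\tr \in \Inv$. With this in hand, I consider the two paths $\pi_1 = \init{m} \transition{\tr_1 \tr_2 \tr \tr_3}$ and $\pi_2 = \init{m} \transition{\tr \tr_1 \tr_2 \tr_3}$, which end in the same marking $p_3$ by commutativity. After noting that $\tr_1$, $\tr_2$, and $\tr_3$ are all visible (each one has a firing that enters or leaves the $Q$-marking $p_2 + p_4$), both paths project to $\tr_1 \tr_2 \tr_3$ on $\Act \setminus \Inv$, so $\pi_1 \weakeq \pi_2$. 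Tracing the markings then reveals that $\pi_1$ passes through $p_2 + p_4$ immediately after firing $\tr_1$ and yields no-stutter trace $\emptyset \, \{Q\} \, \emptyset$, whereas $\pi_2$ avoids $p_2 + p_4$ entirely and yields no-stutter trace $\emptyset$. Since these differ, $\pi_1 \not\stuteq \pi_2$, establishing $\weakeq \nconslab \stuteq$.

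The only delicacy is that ordinary invisibility demands $\tr$-preservation of $L$ on every marking, not merely the reachable ones; the choice of $Q$ is specifically designed to make this check trivial via the $p_3$ argument above. Everything else reduces to routine marking arithmetic on the two paths.
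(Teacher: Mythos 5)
Your proof is correct and follows essentially the same route as the paper's: both reuse the net of Example~\ref{ex:petri_net} with the path pair $\tr_1\tr_2\tr\tr_3$ versus $\tr\tr_1\tr_2\tr_3$, and both derive ordinary invisibility of $\tr$ from the fact that it requires and preserves a token in $p_3$. The only difference is the choice of arbitrary proposition (your singleton $\{p_2+p_4\}$ in place of the paper's $q_l := p_3+p_4+p_6=0$), which is immaterial.
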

\begin{proof}
	Consider the Petri net from Example~\ref{ex:petri_net} with the arbitrary proposition $q_l$.
	Disregard $q_p$ for the moment.
	Structural transition $\tr$ is $q_l$-invisible, hence the paths corresponding to $\tr_1 \tr_2 \tr \tr_3$ and $\tr \tr_1 \tr_2 \tr_3$ are weakly equivalent under ordinary invisibility.
	However, they are not stutter equivalent.
	\qed
\end{proof}
\begin{theorem}
	\label{thm:petri_net_not_labelled_consistently_reach_value_inv}
	Under reach value invisibility, the LSTS underlying a Petri net is not necessarily labelled consistently for polynomial propositions, \ie, $\weakeq^r_v \nconslab \stuteq_p$.
\end{theorem}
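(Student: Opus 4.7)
The strategy is to reuse the Petri net of Example~\ref{ex:petri_net} together with the polynomial proposition $q_p := (1-p_3)(1-p_5) = 1$. Example~\ref{ex:petri_net_AP} already records the two facts I need: $\tr$ is reach value $q_p$-invisible, because the only reachable markings enabling $\tr$ are $101100$ and $001110$, both of which satisfy $p_3 = 1$, so the factor $(1-p_3)$ annihilates the change that $\tr$ inflicts on $(1-p_5)$; on the other hand, $\tr$ is not value $q_p$-invisible, as witnessed by the unreachable firing $002120 \transition{\tr} 002030$ at which the polynomial jumps from $1$ to $2$. The counter-example will exploit exactly this gap between reach-value and value invisibility.

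My next step is to exhibit two initial paths. Let $\pi$ fire the sequence $\tr_1 \tr_2 \tr \tr_3$ from $\init{m}$, and let $\pi'$ fire $\tr \tr_1 \tr_2 \tr_3$. Both are valid firing sequences inside the reachable LSTS depicted in Example~\ref{ex:petri_net}. Since $\tr \in \Inv^r_v$ and relocating $\tr$ does not change the relative order of any other structural transition, the projections of $\pi$ and $\pi'$ onto $T \setminus \Inv^r_v$ coincide, so $\pi \weakeq^r_v \pi'$.

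It then remains to evaluate $q_p$ along both paths. Along $\pi$, the intermediate marking $010100$ satisfies $p_3 = p_5 = 0$, so $q_p$ holds there, yielding the no-stutter trace $\emptyset\{q_p\}\emptyset$. Along $\pi'$, every visited marking has either $p_3 \geq 1$ or $p_5 \geq 1$ (recall that in the reachable portion $p_3 \in \{0,1\}$ and $p_5 \in \{0,1,2\}$), so $(1-p_3)(1-p_5) \neq 1$ throughout, yielding the no-stutter trace $\emptyset$. Hence $\pi \not\stuteq_p \pi'$, which is exactly $\weakeq^r_v \nconslab \stuteq_p$.

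The only obstacle is a small amount of bookkeeping: verifying reach value invisibility of $\tr$ by inspecting the handful of reachable markings at which it is enabled, and computing $q_p$ at the ten markings traversed by $\pi$ and $\pi'$. Both are routine arithmetic checks on the finite reachable LSTS shown in Example~\ref{ex:petri_net}, and no deeper argument is required.
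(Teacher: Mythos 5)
Your proposal is correct and follows essentially the same route as the paper: the same Petri net from Example~\ref{ex:petri_net}, the same proposition $q_p := (1-p_3)(1-p_5) = 1$, and the same pair of paths $\tr_1\tr_2\tr\tr_3$ versus $\tr\tr_1\tr_2\tr_3$, with $\pi \weakeq^r_v \pi'$ but $\pi \not\stuteq_p \pi'$. The only difference is that you spell out the arithmetic (reach value $q_p$-invisibility of $\tr$ and the evaluation of $q_p$ along both paths) that the paper delegates to Example~\ref{ex:petri_net_AP}, which is a welcome but inessential elaboration.
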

\begin{proof}
	Consider the Petri net from Example~\ref{ex:petri_net} with the polynomial proposition $q_p := (1-p_3)(1 - p_5) = 1$ from Example~\ref{ex:petri_net_AP}.
	Disregard $q_l$ in this reasoning.
	Structural transition $\tr$ is reach value $q_p$-invisible, hence the paths corresponding to $\tr_1 \tr_2 \tr \tr_3$ and $\tr \tr_1 \tr_2 \tr_3$ are weakly equivalent under reach value invisibility.
	However, they are not stutter equivalent for polynomial propositions.
	\qed
\end{proof}

It follows from Theorems~\ref{thm:petri_net_not_labelled_consistently_weak_inv} and~\ref{thm:petri_net_not_labelled_consistently_reach_value_inv} and transitivity of $\subseteq$ that Theorems~\ref{thm:petri_net_labelled_consistently_linear_ap}, \ref{thm:petri_net_labelled_consistently_polynomial_ap} and~\ref{thm:petri_net_labelled_consistently_arbitrary_ap} cannot be strengthened further.
In terms of Figure~\ref{fig:weak_stut_lattices}, this means that the dotted arrows cannot be moved downward in the lattice of weak equivalences and cannot be moved upward in the lattice of stutter equivalences.
The implications of these findings on related work will be discussed in the next section.

\section{Related Work}
\label{sec:related_work}
There are many works in literature that apply stubborn sets.
We will consider several works that aim to preserve LTL$_{-X}$ and discuss whether they are correct when it comes to the problem presented in the current work.

Liebke and Wolf~\cite{Liebke2019} present an approach for efficient CTL model checking on Petri nets.
For some formulas, they can reduce CTL model checking to LTL model checking, which allows greater reductions under POR.
They rely on the incorrect LTL preservation theorem, and since they apply the techniques on Petri nets with ordinary invisibility, their theory is incorrect (Theorem~\ref{thm:petri_net_not_labelled_consistently_weak_inv}).
Similarly, the overview of stubborn set theory presented by Valmari and Hansen in~\cite{Valmari2017a} applies reach invisibility and does not necessarily preserve LTL$_{-X}$.
Varpaaniemi~\cite{Varpaaniemi2005} also applies stubborn sets to Petri nets, but relies on a visibility notion that is stronger than strong invisibility.
The correctness of these results is thus not affected (Theorem~\ref{thm:petri_net_labelled_consistently_arbitrary_ap}).
The approach of B{\o}nneland \etal~\cite{Bonneland2019} operates on two-player Petri nets, but only aims to preserve reachability and consequently does not suffer from the inconsistent labelling problem.

A generic implementation of weak stubborn sets is proposed by Laarman \etal~\cite{Laarman2016}.
They use abstract concepts such as guards and transition groups to implement POR in a way that is agnostic of the input language.
The theory they present includes condition \textbf{D1}, which is too weak.
The implementation relies on a number of binary relations on actions, which they call \emph{accordance} relations.
Based on the accordance relations, an approximation of \textbf{D1} can be computed inductively.
Since the inductive procedure also yields the existence of vertical $\act$ transitions in every $s_i$ (cf. Figure~\ref{fig:condition_D1}), this approximation actually coincides with \textbf{D1'}, and thus their implementation is correct by Theorem~\ref{thm:correctness_d1'}
The implementations proposed in~\cite{Valmari2017a,Wolf2018} are similar, albeit specific for Petri nets.

Others~\cite{Hansen2014,Havelund2015} perform action-based model checking and thus strive to preserve weak trace equivalence or inclusion.
As such, they do not suffer from the problems discussed here, which applies only to state labels.

Although Bene\v{s} \etal~\cite{Benes2011,Benes2009} rely on ample sets, and not on stubborn sets, they also discuss weak trace equivalence and stutter-trace equivalence.
In fact, they present an equivalence relation for traces that is a combination of weak and stutter equivalence.
The paper includes a theorem that weak equivalence implies their new state/event equivalence~\cite[Theorem 6.5]{Benes2011}.
However, the counter-example on the right shows that this consistent labelling theorem does not hold.
\setlength\intextsep{3pt}
\begin{wrapfigure}{r}{3.5cm}
	\centering
	\begin{tikzpicture}[->,>=stealth',shorten >=0pt,auto,node distance=2.0cm,semithick]
		
		\tikzstyle{state} = [draw,circle]
		
		\node[state] (s0)                       at (0,0) {};
		\node[state] (s1)                       at (1.2,0.5) {};
		\node[state,label={above:$\{q\}$}] (s2) at (2.4,0.5) {};
		\node[state] (s3)                       at (1.2,-0.5) {};
		
		\path (-0.5,0) edge (s0)
			(s0) edge    node {$\tau$} (s1)
			(s1) edge    node {$a$}    (s2)
			(s0) edge['] node {$a$}    (s3)
		;
	
	\end{tikzpicture}
\end{wrapfigure}
Here, the action $\tau$ is invisible, and the two paths in this transition system are thus weakly equivalent.
However, they are not stutter equivalent, which is a special case of state/event equivalence.
Although the main POR correctness result~\cite[Corollary 6.6]{Benes2011} builds on the incorrect consistent labelling theorem, its correctness does not appear to be affected.
An alternative proof can be constructed based on Lemma~\ref{lmm:shift_action_forward} and~\ref{lmm:introduce_key_action}.

The current work is not the first to point out mistakes in POR theory.
In~\cite{Siegel2019}, Siegel presents a flaw in an algorithm that combines POR and on-the-fly model checking~\cite{Peled1996}.
In that setting, POR is applied on the product of an LSTS and a B\"uchi automaton.
Let $q$ be a state of the LSTS and $s$ a state of the B\"uchi automaton.
While investigating a transition $(q,s) \transition{\act} (q',s')$, condition \textbf{C3}, which---like condition \textbf{L}---aims to solve the action ignoring problem, incorrectly sets $\redf(q,s') = \enabled(q)$ instead of $\redf(q,s) = \enabled(q)$.

\section{Conclusion}
\label{sec:conclusion}
We discussed the inconsistent labelling problem for preservation of stutter-trace equivalence with stubborn sets.
The issue is relatively easy to repair by strengthening condition \textbf{D1}.
For Petri nets, altering the definition of invisibility can also resolve inconsistent labelling depending on the type of atomic propositions.
The impact on applications presented in related works seems to be limited: the problem is typically mitigated in the implementation, since it is very hard to compute \textbf{D1} exactly.
This is also a possible explanation for why the inconsistent labelling problem has not been noticed for so many years.

Since this is not the first error found in POR theory~\cite{Siegel2019}, a more rigorous approach to proving its correctness, \eg using proof assistants, would provide more confidence.

\bibliographystyle{splncs04}
\bibliography{ref}


\end{document}